\newtheorem{definition}{Definition}
\newtheorem{theorem}{Theorem}
\newtheorem{lemma}{Lemma}
\newtheorem{corollary}{Corollary}
\newtheorem{example}{Example}
\title{Strategy Improvement for Concurrent Safety Games\thanks{This
research was suppored in part by the NSF grants
CCR-0132780, CNS-0720884, and CCR-0225610, and by the Swiss
National Science Foundation.}
}
\author{Krishnendu Chatterjee$^\S$ \qquad
  Luca de Alfaro$^{\S}$ \qquad
  Thomas A. Henzinger$^{\dag\ddag}$\\[5pt]
\normalsize
  $\strut^\S$ CE, University of California, Santa Cruz,USA\\
\normalsize
  $\strut^\dag$ EECS,
  University of California, Berkeley,USA\\
\normalsize
  $\strut^\ddag$ Computer and Communication Sciences, EPFL, Switzerland\\
\normalsize
  \texttt{$\{$c\_krish,tah$\}$@eecs.berkeley.edu, luca@soe.ucsc.edu}
}
\date{
}
\begin{document}

\maketitle
\begin{abstract}
We consider concurrent games played on graphs.  At every round of the
game, each player simultaneously and independently selects a move;
the moves jointly determine the transition to a successor state. 
Two basic objectives are the safety objective: ``stay forever in a set $F$ of
states'', and its dual, the reachability objective, ``reach a set $R$ of
states''.
We present in this paper a strategy improvement algorithm for computing the
\emph{value} of a concurrent safety game, that is, the maximal probability 
with which player~1 can enforce the safety objective. 
The algorithm yields a sequence of player-1 strategies which ensure
probabilities of winning that converge monotonically to the value of
the safety game. 

The significance of the result is twofold. 
First, while strategy improvement algorithms were known for 
Markov decision processes
and turn-based games, as well as for concurrent reachability games, 
this is the first strategy improvement algorithm for concurrent
safety games.
Second, and most importantly, the improvement algorithm provides
a way to approximate the value of a concurrent safety game 
{\em from below\/}
(the known value-iteration algorithms approximate the value from above).
Thus, when used together with value-iteration algorithms, or with
strategy improvement algorithms for reachability games, our algorithm
leads to the first practical algorithm for computing converging upper
and lower bounds for the value of reachability and safety games.
\end{abstract}

\section{Introduction}

We consider games played between two players on graphs. 
At every round of the game, each of the two players selects a move; the
moves of the players then determine the transition to the successor
state. 
A play of the game gives rise to a path on the graph. 
We consider two basic goals for the players: 
{\em reachability,} and {\em safety.} 
In the reachability goal, player~1 must reach a set of target states
or, if randomization is needed to play the game, then player~1 must
maximize the probability of reaching the target set. 
In the safety goal, player~1 must ensure that a set of target states
is never left or, if randomization is required, then player~1 must ensure that
the probability of leaving the target set is as low as possible. 
The two goals are dual, and the games are determined: the maximal
probability with which player~1 can reach a target set is equal to one
minus the maximal probability with which player~2 can confine the game in the
complement set \cite{Martin98}.

These games on graphs can be divided into two classes: 
{\em turn-based\/} and {\em concurrent.}
In turn-based games, only one player has a choice of moves at each
state; 
in concurrent games, at each state both players choose a move,
simultaneously and independently, from a set of available moves. 

For turn-based games, the solution of games with reachability and
safety goals has long been known. 
If the move played determines uniquely the successor state, the games
can be solved in linear-time in the size of the game graph.
If the move played determines a probability distribution over the
successor state, the problem of deciding whether a safety of
reachability can be won with probability greater than $p \in [0,1]$
is in NP $\cap$ co-NP \cite{Con92}, and the exact value of a game
can be computed by strategy improvement algorithms \cite{Con93}. 
These results all hinge on the fact that turn-based reachability and
safety games can be optimally won with deterministic, and memoryless,
strategies.  
These strategies are functions from states to moves, so they are
finite in number, and guarantees  the termination of the algorithms. 

The situation is different for the concurrent case, where
randomization is needed even in the case in which the moves played by
the players uniquely determine the successor state.
The {\em value\/} of the game is defined, as usual, as the sup-inf
value: the supremum, over all strategies of player~1, of the infimum,
over all strategies of player~2, of the probability of achieving the
safety or reachability goal. 
In concurrent reachability games, players are only guaranteed the existence of
$\varepsilon$-optimal strategies, that ensure that the value of the
game is achieved within a specified $\varepsilon > 0$ \cite{Kumar81};
these strategies (which depend on $\varepsilon$) are memoryless, but
in general need randomization \cite{crg-tcs07}.  
However, for concurrent safety games memoryless 
optimal strategies exist~\cite{dAM04}.
Thus, these strategies are mappings from states, to probability
distributions over moves. 

While complexity results are available for the solution of concurrent
reachability and safety games, practical algorithms for their
solution, that can provide both a value, and an estimated error, have
so far been lacking.
The question of whether the value of a concurrent reachability or
safety game is at least $p \in [0,1]$ can be decided in {\sc PSpace} via a
reduction to the theory of the real closed field \cite{EY06}. 
This yields a binary-search algorithm to approximate the value. 
This approach is theoretical, but complex due to the complex decision 
algorithms for the theory of reals. 

Thus far, the only practical approach to the solution of concurrent
safety and reachability games has been via value iteration, and
via strategy improvement for reachability games. 
In \cite{dAM04} it was shown how to construct a series of valuations
that approximates from below, and converges, to the value of a
reachability game; the same algorithm provides valuations converging
from above to the value of a safety game. 
In \cite{CdAH06}, it was shown how to construct a series of
strategies for reachability games that converge towards optimality.  
Neither scheme is guaranteed to terminate, not even strategy
improvement, since in general only $\varepsilon$-optimal strategies are
guaranteed to exist. 
Both of these approximation schemes lead to practical algorithms. 
The problem with both schemes, however, is that they provide only 
{\em lower\/} bounds for the value of reachability games, and only
{\em upper\/} bounds for the value of safety games. 
As no bounds are available for the speed of convergence of these
algorithms, the question of how to derive the matching bounds has so
far been open. 

In this paper, we present the first strategy improvement algorithm for
the solution of concurrent safety games. 
Given a safety goal for player~1, the algorithm computes a sequence of
memoryless, randomized strategies $\pi_1^0, \pi_1^1, \pi_1^2, \ldots$ for
player~1 that converge towards optimality. 
Albeit memoryless randomized optimal strategies exist for safety goals
\cite{dAM04}, the strategy improvement algorithm may not converge in
finitely many iterations: indeed, optimal strategies may require moves
to be played with irrational probabilities, while the strategies
produced by the algorithm play moves with probabilities that are
rational numbers.
The main significance of the algorithm is that it provides a converging
sequence of {\em lower\/} bounds for the value of a safety game, and
dually, of {\em upper\/} bounds for the value of a reachability game. 
To obtain such bounds, it suffices to compute the value $v_k(s)$
provided by $\pi_1^k$ at a state $s$, for $k > 0$. 
Once $\pi_1^k$ is fixed, the game is reduced to a Markov decision
process, and the value $v_k(s)$ of the safety game can be computed at
all $s$ e.g.\ via linear programming \cite{CY95,bda95}. 
Thus, together with the value or strategy improvement algorithms of
\cite{dAM04,CdAH06}, the algorithm presented in this paper
provides the first practical way of computing converging lower and
upper bounds for the values of concurrent reachability and safety
games. 
We also present a detailed analysis of termination criteria for turn-based
stochastic games, and obtain an improved upper bound for termination 
for turn-based stochastic games.

The strategy improvement algorithm for reachability games of
\cite{CdAH06} is based on locally improving the strategy on the
basis of the valuation it yields. 
This approach does not suffice for safety games: the sequence of
strategies obtained would yield increasing values to player~1, but
these value would not necessarily converge to the value of the game.
In this paper, we introduce a novel, and non-local, improvement step,
which augments the standard value-based improvement step.
The non-local step involves the analysis of an
appropriately-constructed turn-based game.  
As value iteration for safety games converges from above, while our
sequences of strategies yields values that converge from below, the
proof of convergence for our algorithm cannot be derived from a
connection with value iteration, as was the case for reachability
games. 
Thus, we developed new proof techniques to show both the monotonicity
of the strategy values produced by our algorithm, and to show
convergence to the value of the game.

\section{Definitions} 

\paragraph{Notation.} 
For a countable set~$A$, a {\em probability distribution\/} on $A$ is a 
function $\trans\!:A\to[0,1]$ such that $\sum_{a \in A} \trans(a) = 1$. 
We denote the set of probability distributions on $A$ by $\distr(A)$. 
Given a distribution $\trans \in \distr(A)$, we denote by 
$\supp(\trans) = \{x \in A \mid \trans(x) > 0\}$ the support set
of $\trans$.

\begin{definition}[Concurrent games]
A (two-player) {\em concurrent game structure\/} $\gamegraph = \langle S,
\moves, \mov_1, \mov_2, \trans \rangle$ consists of the following
components: 
\begin{itemize}

\item A finite state space $S$ and a finite set $\moves$ of moves or actions.

\item Two move assignments $\mov_1, \mov_2 \!: S\to 2^\moves
	\setminus \emptyset$.  For $i \in \{1,2\}$, assignment
	$\mov_i$ associates with each state $s \in S$ a nonempty
	set $\mov_i(s) \subseteq \moves$ of moves available to player $i$
	at state $s$.  

\item 
A probabilistic transition function 
$\trans: S \times \moves \times \moves \to \Distr(S)$ that gives the
probability $\trans(s, a_1, a_2)(t)$ of a transition from $s$ to
$t$ when player~1 chooses at state $s$ move $a_1$ and player~2 chooses 
move $a_2$, for all $s,t\in S$ and $a_1 \in \mov_1(s)$, $a_2 \in \mov_2(s)$.  
\end{itemize}
\end{definition}

\noindent
We denote by $|\trans|$ the size of transition 
function, i.e., $|\trans|=\sum_{s\in S,a \in \mov_1(s),b\in \mov_2(s),t\in S} 
|\trans(s,a,b)(t)|$, where $|\trans(s,a,b)(t)|$ is the number of bits 
required to specify the transition probability $\trans(s,a,b)(t)$.
We denote by $|\gamegraph|$ the size of the game graph, and $|G|=|\trans|+|S|$.
At every state $s\in S$, player~1 chooses a move $a_1\in\mov_1(s)$,
and simultaneously and independently player~2 chooses a move $a_2\in\mov_2(s)$.
The game then proceeds to the successor state $t$ with probability
$\trans(s,a_1,a_2)(t)$, for all $t \in S$.
A state $s$ is an \emph{absorbing state} if for all 
$a_1 \in \mov_1(s)$ and $a_2 \in \mov_2(s)$, we have 
$\trans(s, a_1,a_2)(s)=1$.
In other words, at an absorbing state $s$ for all choices of moves of the 
two players, the successor state is always $s$. 

\begin{definition}[Turn-based stochastic games]
A \emph{turn-based stochastic game graph} 
(\emph{$2\half$-player game graph})
$\gamegraph =\langle (S, E), (S_1,S_2,S_R),\trans\rangle$ 
consists of a finite directed graph $(S,E)$, a partition $(S_1$, $S_2$,
$S_R)$ of the finite set $S$ of states, and a probabilistic transition 
function $\trans$: $S_R \rightarrow \distr(S)$, where $\distr(S)$ denotes the 
set of probability distributions over the state space~$S$. 
The states in $S_1$ are the {\em player-$1$\/} states, where player~$1$
decides the successor state; the states in $S_2$ are the {\em 
player-$2$\/} states, where player~$2$ decides the successor state; 
and the states in $S_R$ are the {\em random or probabilistic\/} states, where
the successor state is chosen according to the probabilistic transition
function~$\trans$. 
We assume that for $s \in S_R$ and $t \in S$, we have $(s,t) \in E$ 
iff $\trans(s)(t) > 0$, and we often write $\trans(s,t)$ for $\trans(s)(t)$. 
For technical convenience we assume that every state in the graph 
$(S,E)$ has at least one outgoing edge.
For a state $s\in S$, we write $E(s)$ to denote the set 
$\set{t \in S \mid (s,t) \in E}$ of possible successors.
We denote by $|\trans|$ the size of the transition 
function, i.e., $|\trans|=\sum_{s\in S_R,t\in S} 
|\trans(s)(t)|$, where $|\trans(s)(t)|$ is the number of bits 
required to specify the transition probability $\trans(s)(t)$.
We denote by $|\gamegraph|$ the size of the game graph, and 
$|G|=|\trans|+|S|+|E|$.
\end{definition}

\paragraph{Plays.}
A \emph{play} $\pat$ of $\gamegraph$ is an infinite sequence
$\pat = \langle s_0,s_1,s_2,\ldots \rangle $ of states in $S$ such that 
for all $k\ge 0$, there are moves $a^k_1 \in \mov_1(s_k)$ and 
$a^k_2 \in \mov_2(s_k)$ with $\trans(s_k,a^k_1,a^k_2)(s_{k+1}) >0$.
We denote by $\Paths$ the set of all plays, and by $\Paths_s$ the set of all 
plays $\pat=\seqs$ such that $s_0=s$, that is, the set of plays starting 
from state~$s$. 

\paragraph{Selectors and strategies.}
A \emph{selector} $\xi$ for player $i \in \set{1,2}$ is a function
$\xi : S \to \Distr(\moves)$ such that for all states $s \in S$ and
moves $a \in \moves$, if $\xi(s)(a) > 0$, then $a \in \mov_i(s)$.
A selector $\xi$ for player $i$ at a state $s$ is a distribution 
over moves such that if $\xi(s)(a)>0$, then $a \in \mov_i(s)$.
We denote by $\Sel_i$ the set of all selectors for player~$i \in \set{1,2}$,
and similarly, we denote by $\Sel_i(s)$ the set of all selectors for
player~$i$ at a state $s$.
The selector $\xi$ is {\em pure\/} if for every state $s \in S$, there is 
a move $a \in \moves$ such that $\xi(s)(a) = 1$.
A \emph{strategy} for player $i\in\set{1,2}$ is a function 
$\stra: S^+ \to \Distr(\moves)$ that
associates with every finite, nonempty sequence of states,
representing the history of the play so far, a selector for player~$i$; 
that is, for all $w \in S^*$ and $s \in S$, we have 
$\supp(\stra(w \cdot s)) \subs \mov_i(s)$.
The strategy $\stra$ is {\em pure\/} 
if it always chooses a pure selector;
that is, for all $w \in S^+$, there is a move $a \in \moves$ such that 
$\stra(w)(a)=1$.
A \emph{memoryless} strategy is independent of the history of the play and
depends only on the current state. 
Memoryless strategies correspond to selectors; we write
$\overline{\xi}$ for the memoryless strategy consisting in playing
forever the selector $\xi$. 
A strategy is \emph{pure memoryless} if it is both pure and
memoryless.
In a turn-based stochastic game, a strategy for player~1 is a function
$\stra_1:S^* \cdot S_1 \to \Distr(S)$, such that for all $w \in S^*$
and for all $s \in S_1$ we have $\supp(\stra_1(w\cdot s)) \subs E(s)$.
Memoryless strategies and pure memoryless strategies are obtained 
as the restriction of strategies as in the case of concurrent game 
graphs.
The family of strategies for player~2 are defined analogously.
We denote by $\Stra_1$ and $\Stra_2$ the sets of all strategies for 
player $1$ and player $2$, respectively.
We denote by $\Stra_i^M$ and $\Stra_i^\PM$ the sets of memoryless strategies 
and pure memoryless strategies for player~$i$, respectively. 

\paragraph{Destinations of moves and selectors.}
For all states $s \in S$ and moves $a_1 \in \mov_1(s)$ and $a_2 \in \mov_2(s)$,
we indicate by $\dest(s,a_1,a_2) = \supp(\trans(s,a_1,a_2))$
the set of possible successors of $s$ when the moves $a_1$ and $a_2$ are 
chosen.
Given a state $s$, and selectors $\xi_1$ and $\xi_2$ for the two players, 
we denote by 
\[
  \dest(s,\xi_1,\xi_2) = \bigcup_{\begin{array}{c}
      \scriptstyle a_1 \in \supp(\xi_1(s)),\\ 
      \scriptstyle a_2 \in \supp(\xi_2(s)) \end{array}} \dest(s,a_1,a_2)
\] 
the set of possible successors of $s$ with respect to the 
selectors $\xi_1$ and $\xi_2$. 

Once a starting state $s$ and strategies $\stra_1$ and $\stra_2$
for the two players are fixed, the game is reduced to an
ordinary stochastic  process.
Hence, the probabilities of events are uniquely defined, where an {\em
event\/} $\cala\subseteq\Paths_s$ is a measurable set of
plays.
For an event $\cala\subseteq\Paths_s$, we denote by
$\Prb_s^{\stra_1,\stra_2}(\cala)$ the probability that a play belongs to
$\cala$ when the game starts from $s$ and the players follows the
strategies $\stra_1$ and~$\stra_2$.
Similarly, for a measurable function $f: \Paths_s \to \reals$, 
we denote by $\E_s^{\stra_1,\stra_2}(f)$ the expected value of $f$ when
the game starts from $s$ and the players follow the strategies
$\stra_1$ and~$\stra_2$.
For $i \geq 0$, we denote by $\randpath_i: \Paths
\to S$ the random  variable denoting the $i$-th state along a
play. 

\paragraph{Valuations.}
A {\em valuation\/} is a mapping $v: S \to [0,1]$ associating a real
number $v(s) \in [0,1]$ with each state $s$. 
Given two valuations $v, w: S \to \reals$, we write $v \leq w$ when
$v(s) \leq w(s)$ for all states $s \in S$. 
For an event $\cala$, we denote by 
$\Prb^{\stra_1,\stra_2}(\cala)$ the valuation $S \to [0,1]$
defined for all states $s \in S$ 
by $\bigl(\Prb^{\stra_1,\stra_2}(\cala)\bigr)(s) =
\Prb_s^{\stra_1,\stra_2}(\cala)$. 
Similarly, for a measurable function $f: \Omega_s \to [0,1]$,
we denote by $\E^{\stra_1,\stra_2}(f)$ the valuation $S \to [0,1]$
defined for all $s \in S$ by $\bigl(\E^{\stra_1,\stra_2}(f)\bigr)(s) =
\E_s^{\stra_1,\stra_2}(f)$. 

\paragraph{Reachability and safety objectives.}
Given a set $F \subs S$ of \emph{safe} states, the objective of a safety 
game consists in never leaving $F$.
Therefore, we define the set of winning plays as the set 
$\Safe(F)=\set{\seq{s_0,s_1, s_2,\ldots} \in \Paths \mid s_k \in F 
\mbox{ for all } k \geq 0}$.
Given a subset $T \subs S$ of \emph{target} states, the objective of a
reachability game consists in reaching $T$. 
Correspondingly, the set winning plays is 
$\Reach(T) = \set{\seq{s_0, s_1, s_2,\ldots} \in \Paths \mid s_k
\in T \mbox{ for some }k \ge 0}$ of plays that visit $T$.
For all $F \subs S$ and $T \subs S$, the sets $\Safe(F)$ and $\Reach(T)$ 
is measurable.
An objective in general is a measurable set, and in this paper we would
consider only reachability and safety objectives.
For an objective $\Phi$, the probability of satisfying 
$\Phi$ from a state $s \in S$ under strategies $\stra_1$ and $\stra_2$ 
for players~1 and~2, respectively,
is $\Prb_s^{\stra_1,\stra_2}(\Phi)$.
We define the \emph{value} for player~1 of game with objective $\Phi$ 
from the state $s \in S$ as 
\[
  \va(\Phi)(s) =
  \sup_{\stra_1\in\Stra_1}\inf_{\stra_2\in\Stra_2}
  \Prb_s^{\stra_1,\stra_2}(\Phi); 
\]
i.e., the value is the maximal probability with which player~1 can 
guarantee the satisfaction of $\Phi$ against all player~2 strategies.
Given a player-1 strategy $\stra_1$, we use the notation
\[
\winval{1}^{\stra_1}(\Phi)(s)
= \inf_{\stra_2 \in \Stra_2} \Prb_s^{\stra_1,\stra_2}(\Phi).
\]
A strategy $\stra_1$ for player~1 is {\em optimal\/} for an
objective $\Phi$ if for all 
states $s \in S$, we have 
\[
\winval{1}^{\stra_1}(\Phi)(s)= \va(\Phi)(s). 
\]
For $\vare > 0$, a strategy $\stra_1$ for player~1 is 
{\em $\vare$-optimal\/} if for all states $s \in S$, we have 
\[
\winval{1}^{\stra_1}(\Phi)(s)
  \geq \va(\Phi)(s) - \vare. 
\]
The notion of values and optimal strategies for player~2 are
defined analogously.
Reachability and safety objectives are dual, i.e., we have
$\Reach(T)=\Paths \setminus \Safe(S\setminus T)$.  
The quantitative determinacy result of~\cite{Martin98} ensures that for all
states $s \in S$, we have
\[  
\va(\Safe(F))(s) + \vb(\Reach(S\setminus F))(s)  = 1. 
\]

\begin{theorem}[Memoryless determinacy]\label{thrm:memory-determinacy}
For all concurrent game graphs $G$, for all $F,T \subs S$, such
that $F=S \setminus T$, the following assertions hold.
\begin{enumerate}
\item \cite{FV97} Memoryless optimal strategies exist
for safety objectives $\Safe(F)$.
\item \cite{CdAH06,EY06} For all $\vare>0$, memoryless
$\vare$-optimal strategies exist for reachability
objectives $\Reach(T)$.

\item \cite{Con92} If $G$ is a turn-based stochastic game
graph, then pure memoryless optimal strategies exist
for reachability objectives $\Reach(T)$ and safety 
objectives $\Safe(F)$. 
\end{enumerate}
\end{theorem}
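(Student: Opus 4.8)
All three assertions are classical, and the plan is to obtain them from a single fixpoint characterization of the value together with von Neumann's minimax theorem for finite matrix games. For a valuation $v:S\to[0,1]$ define the one-step operator $H$ by
\[
(H(v))(s)\;=\;\sup_{\xi_1\in\Sel_1(s)}\ \inf_{\xi_2\in\Sel_2(s)}\ \sum_{t\in S}\Bigl(\sum_{a_1\in\mov_1(s),\,a_2\in\mov_2(s)}\xi_1(s)(a_1)\,\xi_2(s)(a_2)\,\trans(s,a_1,a_2)(t)\Bigr)\,v(t).
\]
For each fixed $s$ the right-hand side is the value of a finite matrix game whose entries are affine in $v$, so by the minimax theorem the supremum and infimum are attained and may be exchanged; moreover $H$ is monotone, i.e.\ $v\le w$ implies $H(v)\le H(w)$.

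\emph{Assertion 1.} Let $v^\ast=\va(\Safe(F))$. First show $v^\ast$ equals the greatest fixpoint $w$ of the map $v\mapsto H(v)$ subject to $v(s)=0$ for $s\notin F$. The iteration $w_0\equiv 1$ on $F$ and $\equiv 0$ off $F$, with $w_{n+1}(s)=(H(w_n))(s)$ for $s\in F$ and $0$ otherwise, is nonincreasing (by monotonicity of $H$) and converges to a fixpoint $w$; an induction identifies $w_n(s)$ with $\sup_{\stra_1}\inf_{\stra_2}\Prb_s^{\stra_1,\stra_2}(\randpath_0,\ldots,\randpath_n\in F)$, so $v^\ast\le w$. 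Conversely, let $\xi_1^\ast$ be the player-1 selector attaining the matrix-game value $w(s)=(H(w))(s)$ at every $s\in F$. Under $\overline{\xi_1^\ast}$ and any player-2 strategy, $\bigl(w(\randpath_n)\bigr)_n$ is a bounded submartingale for as long as the play stays in $F$, and a martingale/Borel--Cantelli argument bounds the probability of ever leaving $F$ from $s$ by $1-w(s)$, giving $\winval{1}^{\overline{\xi_1^\ast}}(\Safe(F))(s)\ge w(s)\ge v^\ast(s)$; the reverse inequality is immediate, so $\overline{\xi_1^\ast}$ is memoryless optimal.

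\emph{Assertion 2.} Here $u^\ast=\va(\Reach(T))$ is instead the \emph{least} fixpoint, the increasing limit of $u_0\equiv 0$, $u_{n+1}=H(u_n)$ with $u_n\equiv 1$ maintained on $T$. The difficulty, and the reason only the $\varepsilon$ version holds, is that the matrix-game optimal selector against $u^\ast$ need not, as a memoryless strategy, drive the play into $T$: it can be indifferent on states from which $T$ is reached only with small probability. Following \cite{CdAH06,EY06}, fix $N$ with $u_N\ge u^\ast-\varepsilon/2$ and build a memoryless randomized selector that at each state plays the optimal response to $u^\ast$ with probability $1-\delta$ and, with probability $\delta$, a selector guaranteeing a strictly positive chance of progress toward $T$ within $N$ steps; choosing $\delta$ small and estimating the resulting geometric loss shows the strategy is $\varepsilon$-optimal.

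\emph{Assertion 3.} In a turn-based game the matrix game at a state degenerates: $H$ is a plain $\max$ over $E(s)$ at $s\in S_1$, a plain $\min$ at $s\in S_2$, and the fixed average $\sum_{t}\trans(s,t)\,v(t)$ at $s\in S_R$. Hence the optimal selectors produced above for safety, and symmetrically for reachability, may be taken \emph{pure}; since there are only finitely many pure memoryless strategies, a policy-iteration argument — repeatedly switching player~1's move at a state whenever this strictly raises the value, which by monotonicity of $H$ never lowers it elsewhere — terminates at a pure memoryless optimal strategy, a symmetric argument yields one for player~2, and the determinacy identity $\va(\Safe(F))(s)+\vb(\Reach(S\setminus F))(s)=1$ ties the two objectives together. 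The main obstacle throughout is the passage from the fixpoint equality to the behaviour of infinite plays: the submartingale/Borel--Cantelli step for safety, and the controlled mixing forcing almost-sure progress while staying $\varepsilon$-close to the fixpoint value for reachability.
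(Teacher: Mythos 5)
The paper does not actually prove this theorem; all three assertions are imported from the literature with citations, so any proof you give is necessarily your own route rather than the paper's. Your argument for assertion~1 is sound and standard: the nonincreasing value iteration converges to a fixpoint $w\ge\va(\Safe(F))$, a selector $\xi_1^*$ attaining the matrix-game value at $w$ makes $w(\randpath_n)$ a bounded submartingale up to the first exit from $F$, and optional stopping gives $\inf_{\stra_2}\Prb_s^{\ov{\xi_1^*},\stra_2}(\Safe(F))\ge w(s)$. The safety half of assertion~3 follows the same way with a pure selector, since the matrix game degenerates to a maximum over successors.

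The genuine gap is in the reachability half of assertion~3. First, ``take a pure selector attaining the max at the least fixpoint $u^*$'' does not by itself yield an optimal strategy: at a state $s$ with $u^*(s)=1$ that has both a self-loop and an edge into $T$, the self-loop attains the max but the resulting memoryless strategy has value $0$; locally optimal selectors for reachability need a tie-breaking rule (attractor-based, which is exactly how this paper later constructs its pure \emph{proper} selector). Second, your repair via policy iteration --- ``switching whenever this strictly raises the value, which by monotonicity of $H$ never lowers it elsewhere'' --- is precisely the step that is not justified for non-halting turn-based stochastic reachability games. Monotonicity only shows that the old value $u_\sigma$ satisfies $u_\sigma\le H_{\sigma'}(u_\sigma)$ for the new one-step operator, hence lies below its \emph{greatest} fixpoint; but the value of the new memoryless strategy (the opponent's best response minimizes reachability) is the \emph{least} fixpoint of $H_{\sigma'}$, so no comparison $u_{\sigma'}\ge u_\sigma$ follows. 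This asymmetry is exactly why Condon's argument assumes halting games and why the paper's own treatment of reachability strategy improvement must introduce proper selectors and prove properness is preserved before claiming $u_{i+1}\ge u_i$; for the safety player the asymmetry points the right way (the opponent's reachability value is a least fixpoint, so exhibiting a feasible point of the minimizing linear program does bound it, which is how Lemma~\ref{lemm:stra-improve-safe1} works). Finally, assertion~2 is only a sketch: the mixing construction (``with probability $\delta$ play a selector guaranteeing a strictly positive chance of progress toward $T$ within $N$ steps'') leaves ``progress'' undefined for a memoryless selector and omits the quantitative estimate that makes the geometric-loss bound go through; that estimate is the entire content of the cited results. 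Either fill these two holes or do what the paper does and cite the three assertions.
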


\section{Markov Decision Processes}\label{sec:mdp}

\noindent
To develop our arguments, we need some facts about one-player versions of
concurrent stochastic games, known as \emph{Markov decision processes}
(MDPs) \cite{Derman,Bertsekas95}. 
For $i \in \set{1,2}$, a \emph{player-$i$ MDP} 
(for short, $i$-MDP) is a concurrent game where, for all
states $s \in S$, we have $|\mov_{3-i}(s)|=1$.
Given a concurrent game $G$, if we fix a memoryless strategy
corresponding to selector $\xi_1$ for player~1, the game
is equivalent to a 2-MDP $G_{\xi_1}$ with the transition function 
\[
\trans_{\xi_1}(s,a_2)(t) = \sum_{a_1 \in \mov_1(s)} 
\trans(s,a_1,a_2)(t) \cdot \xi_1(s)(a_1),
\]
for all $s \in S$ and $a_2 \in \mov_2(s)$. 
Similarly, if we fix selectors $\xi_1$ and $\xi_2$ for both players in a
concurrent game $G$, we obtain a Markov chain, which we denote by
$G_{\xi_1,\xi_2}$. 

\paragraph{End components.}

In an MDP, the sets of states that play an equivalent role to the 
closed recurrent classes of Markov chains \cite{Kemeny} are called
``end components'' \cite{CY95,luca-thesis}. 

\begin{definition}[End components]
An {\em end component\/} of an $i$-MDP $G$, for $i\in\set{1,2}$, is a 
subset $C \subs S$ of the states such that there is a selector $\xi$ 
for player~$i$ so that 
$C$ is a closed recurrent class of the Markov chain $G_\xi$. 
\end{definition}

\noindent
It is not difficult to see that an equivalent characterization of an
end component $C$ is the following. 
For each state $s \in C$, there is a subset $M_i(s) \subs \mov_i(s)$
of moves such that:
\begin{enumerate}
\item {\em (closed)} if a move in $M_i(s)$ is chosen 
by player $i$ at state $s$, then all successor states that are obtained 
with nonzero probability lie in $C$; and

\item {\em (recurrent)} the graph $(C,E)$, where $E$
consists of the transitions that occur with nonzero probability when
moves in $M_i(\cdot)$ are chosen by player $i$, is strongly connected.

\end{enumerate}
Given a play $\pat\in\Paths$, 
we denote by $\infi(\pat)$ the set of states that
occurs infinitely often along $\pat$.  
Given a set $\calf \subs 2^S$ of subsets of states, we denote by 
$\infi(\calf)$ the event $\set{\pat \mid \infi(\pat) \in \calf}$.
The following theorem states that in a 2-MDP, for every strategy of
player~2, the set of states that are visited infinitely often is,
with probability~1, an end component. 
Corollary~\ref{coro:prob1} follows easily from Theorem~\ref{theo-ec}.

\begin{theorem}{\rm\cite{luca-thesis}} \label{theo-ec}
For a player-1 selector $\xi_1$, 
let $\calc$ be the set of end components of a 2-MDP $G_{\xi_1}$.
For all player-2 strategies $\stra_2$ and all states $s \in S$, we have
$\Prb_s^{\ov{\xi}_1,\stra_2}(\infi(\calc)) = 1$. 
\end{theorem}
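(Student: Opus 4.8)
\medskip
\noindent\textbf{Proof plan.}
Since the player-$1$ selector $\xi_1$ is fixed, $G_{\xi_1}$ is an ordinary $2$-MDP, so it is enough to prove the statement for $2$-MDPs: for every player-$2$ strategy $\stra_2$ and every state $s$, the set $\infi(\pat)$ of states visited infinitely often along the play is, with probability~$1$, an end component. The plan is to separate a \emph{probabilistic} step, which pins down which transitions can be taken infinitely often, from a \emph{combinatorial} step, which checks that the infinitely-recurring states together with the infinitely-played moves satisfy the ``closed'' and ``recurrent'' conditions of the move-set characterization of end components. Throughout, fix $\stra_2$ and $s$, work on the probability space of plays together with the moves chosen along them, and let $\delta>0$ be the smallest positive transition probability occurring in $\trans_{\xi_1}$.

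Call a pair $(s',a_2)$ with $a_2\in\mov_2(s')$ \emph{recurrent along $\pat$} if, for infinitely many $k$, we have $\randpath_k=s'$ and player~$2$ plays $a_2$ at step~$k$. The first, and crucial, step is to show that with probability~$1$, for every pair $(s',a_2)$ that is recurrent along the play and every $t$ with $\trans_{\xi_1}(s',a_2)(t)>0$, we have $t\in\infi(\pat)$. I would argue this one triple $(s',a_2,t)$ at a time: each step~$k$ at which $\randpath_k=s'$ and player~$2$ plays $a_2$ is one at which, conditioned on the history of states and moves through step~$k$, the next state equals $t$ with probability $\trans_{\xi_1}(s',a_2)(t)\ge\delta$. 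By L\'evy's conditional version of the second Borel--Cantelli lemma, the sum of these conditional probabilities diverges exactly on the event that infinitely many such steps occur, and on that event the next state equals $t$ infinitely often, i.e.\ $t\in\infi(\pat)$; a union bound over the finitely many triples $(s',a_2,t)$ completes this step.

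On the probability-$1$ event just established, fix any play $\pat$ and set $Z=\infi(\pat)$, which is nonempty because $S$ is finite. For $s'\in Z$ let $M_2(s')=\set{a_2\in\mov_2(s')\mid (s',a_2)\text{ is recurrent along }\pat}$; this is nonempty, since $s'$ recurs and $\mov_2(s')$ is finite. By the probabilistic step, choosing a move in $M_2(s')$ leads only to states of $Z$, so $(Z,M_2(\cdot))$ satisfies the ``closed'' condition. For the ``recurrent'' condition, observe that after some finite time the play stays in $Z$ \emph{and} uses only recurrent pairs (each non-recurrent pair is used only finitely often, and there are finitely many pairs); hence from that time on every transition taken is an edge of the graph on $Z$ induced by the moves $M_2(\cdot)$, and since every state of $Z$ recurs, splicing together segments of the play between any two states $u,v\in Z$ exhibits a path from $u$ to $v$ in that graph, so it is strongly connected. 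By the equivalent characterization of end components, $Z\in\calc$; hence $\Prb_s^{\ov{\xi}_1,\stra_2}(\infi(\calc)) = \Prb_s^{\ov{\xi}_1,\stra_2}(\set{\pat\mid\infi(\pat)\in\calc}) = 1$.

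The main obstacle is the probabilistic step, and specifically the justification of the conditional Borel--Cantelli argument against an \emph{arbitrary} history-dependent, randomized player-$2$ strategy: one has to choose the filtration correctly (recording all states and moves through step~$k$, but not the state reached at step~$k+1$) and verify the Markov-type identity $\Prb(\randpath_{k+1}=t\mid\text{history and moves through step }k)=\trans_{\xi_1}(\randpath_k,a^k_2)(t)$ that makes the conditional probabilities computable. The combinatorial step is routine, provided one is careful to use that \emph{eventually only recurrent pairs are played} --- not merely that the play is eventually confined to $Z$ --- since this is exactly what yields strong connectivity rather than just reachability within $Z$.
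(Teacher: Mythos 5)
Your proof is correct, and it is essentially the standard argument for this result: the paper itself states the theorem without proof, citing \cite{luca-thesis}, where the same two-step structure appears --- a conditional Borel--Cantelli argument showing that every successor of an infinitely-recurring state--action pair is itself visited infinitely often almost surely, followed by the combinatorial verification that $\infi(\pat)$ with the infinitely-played moves satisfies the closed and recurrent conditions. Your attention to the filtration (conditioning on states and moves through step $k$) and to the fact that eventually \emph{only recurrent pairs} are played --- which is what yields strong connectivity rather than mere closure --- addresses exactly the points where a sloppier version of this argument would break.
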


\begin{corollary}\label{coro:prob1}
For a player-1 selector $\xi_1$, 
let $\calc$ be the set of end components of a 2-MDP $G_{\xi_1}$, and
let $Z= \bigcup_{C \in \calc} C$ be the set of states of all
end components.
For all player-2 strategies $\stra_2$ and all states $s \in S$, we have
$\Prb_s^{\ov{\xi}_1,\stra_2}(\Reach(Z)) = 1$. 
\end{corollary}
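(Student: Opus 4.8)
The plan is to derive this directly from Theorem~\ref{theo-ec} by a simple event-inclusion argument, since the corollary is essentially a weakening of that theorem. First I would fix an arbitrary player-2 strategy $\stra_2$ and an arbitrary starting state $s\in S$, and recall that by definition $\infi(\calc) = \set{\pat \mid \infi(\pat) \in \calc}$, i.e.\ the set of plays whose set of infinitely-often-visited states is exactly one of the end components of $G_{\xi_1}$.

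The key step is the observation that $\infi(\calc) \subs \Reach(Z)$. Indeed, take any play $\pat = \seq{s_0,s_1,s_2,\ldots} \in \infi(\calc)$. Then $C := \infi(\pat)$ is an end component, so $C \in \calc$, and hence $C \subs Z$ by definition of $Z = \bigcup_{C'\in\calc} C'$. Since $C$ is nonempty (every end component is nonempty, being a closed recurrent class of a Markov chain), there is some state $t \in C \subs Z$ that occurs infinitely often along $\pat$; in particular $s_k = t \in Z$ for some (in fact infinitely many) $k \ge 0$, so $\pat \in \Reach(Z)$.

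Finally I would combine this inclusion with Theorem~\ref{theo-ec} and monotonicity of the probability measure: both $\infi(\calc)$ and $\Reach(Z)$ are measurable (the latter by the remark that $\Reach(T)$ is measurable for every $T \subs S$), so
\[
  \Prb_s^{\ov{\xi}_1,\stra_2}(\Reach(Z)) \;\ge\; \Prb_s^{\ov{\xi}_1,\stra_2}(\infi(\calc)) \;=\; 1,
\]
and since probabilities are bounded by $1$, equality follows. As $\stra_2$ and $s$ were arbitrary, this proves the corollary.

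I do not expect any real obstacle here: the only points needing care are the measurability of the two events (already granted in the excerpt and by Theorem~\ref{theo-ec}) and the non-emptiness of end components, which is immediate from their definition as closed recurrent classes; the rest is the one-line inclusion $\infi(\calc)\subs\Reach(Z)$ together with monotonicity.
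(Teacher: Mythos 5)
Your proof is correct and is exactly the argument the paper has in mind: the paper gives no explicit proof, merely noting that the corollary ``follows easily'' from Theorem~\ref{theo-ec}, and the inclusion $\infi(\calc)\subs\Reach(Z)$ together with monotonicity of the probability measure is precisely that easy derivation. Nothing further is needed.
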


\paragraph{MDPs with reachability objectives.}\label{subsec:mdpreach}

Given a 2-MDP with a reachability objective $\Reach(T)$ for player~2,
where $T\subseteq S$, the values can be obtained as the solution of a 
linear program~\cite{FV97}.
The linear program has a variable $x(s)$ for all states $s \in S$, and the
objective function and the constraints are as follows:
\[
\min \ \displaystyle \sum_{s \in S} x(s) \quad \text{subject to } \\
\]
\vspace*{-2ex}
\begin{align*}
 x(s) \geq \displaystyle \sum_{t \in S} x(t) \cdot \trans(s,a_2) (t) 
	& \text{\ \ for all \ } s\in S \text{\ and\ } 
        a_2 \in \mov_2(s) \\
 x(s) = 1 & \text{\ \ for all \ } s \in T \\
 0 \leq x(s) \leq 1 & \text{\ \ for all \ } s \in S
\end{align*}
The correctness of the above linear program to compute the values follows 
from~\cite{Derman,FV97}.

\section{Strategy Improvement for Safety Games}

\noindent
In this section we present a strategy improvement 
algorithm for concurrent games with safety objectives. 
The algorithm will produce a sequence of selectors 
$\gamma_0, \gamma_1, \gamma_2, \ldots$ for player 1, such that: 
\begin{enumerate}
  \item \label{l-improve-1} 
  for all $i \geq 0$, we have 
  $\vas{\overline{\gamma}_i}(\Safe(F)) \leq \vas{\overline{\gamma}_{i+1}}(\Safe(F))$;

  \item \label{l-improve-3} 
  if there is $i \geq 0$ such that $\gamma_i = \gamma_{i+1}$, 
  then $\vas{\overline{\gamma}_i}(\Safe(F)) = \va(\Safe(F))$; and 

\item \label{l-improve-2} 
  $\lim_{i \rightarrow \infty} \vas{\overline{\gamma}_i}(\Safe(F)) = \va(\Safe(F))$. 
\end{enumerate}
Condition~\ref{l-improve-1} guarantees that the algorithm computes a
sequence of monotonically improving selectors. 
Condition~\ref{l-improve-3} guarantees that if a selector cannot be
improved, then it is optimal. 
Condition~\ref{l-improve-2} guarantees that the value guaranteed by
the selectors converges to the value of the game, or equivalently,
that for all $\vare > 0$, there is a number $i$ of iterations
such that the memoryless player-1 strategy $\ov{\gamma}_i$ is 
$\vare$-optimal. 
Note that for concurrent safety games, there may be no $i \geq
0$ such that $\gamma_i = \gamma_{i+1}$, that is, the algorithm may fail to
generate an optimal selector. 
This is because there are concurrent safety games such that the
values are irrational~\cite{dAM04}.
We start with a few notations

\medskip\noindent{\bf The $\Pre$ operator and optimal selectors.}
Given a valuation $v$, and two selectors 
$\xi_1 \in \Sel_1$ and $\xi_2 \in \Sel_2$, 
we define the valuations 
$\Pre_{\xi_1,\xi_2}(v)$, $\Pre_{1:\xi_1}(v)$, and $\Pre_1(v)$ 
as follows, for all states $s \in S$: 
\begin{multline*} 
  \Pre_{\xi_1,\xi_2}(v)(s) 
  = \sum_{a,b \in \moves} \, \sum_{t \in S} v(t) \cdot \trans(s,a,b)(t) \cdot
    \xi_1(s)(a) \cdot \xi_2(s)(b)
  \\ \quad
  \Pre_{1:\xi_1}(v)(s) 
   =  \inf_{\xi_2 \in \Sel_2} \, \Pre_{\xi_1 ,\xi_2}(v)(s) \hfill
  \\ \quad
  \Pre_1(v)(s) 
   = \sup_{\xi_1 \in \Sel_1} \, \inf_{\xi_2 \in \Sel_2} \, 
  \Pre_{\xi_1 ,\xi_2}(v)(s) \hfill
\end{multline*}
Intuitively, $\Pre_1(v)(s)$ is the greatest expectation of $v$ that
player~1 can guarantee at a successor state of $s$. 
Also note that given a valuation $v$, the computation of $\Pre_1(v)$ 
reduces to the solution of a zero-sum one-shot matrix game, and can be 
solved by linear programming.
Similarly, 
$\Pre_{1:\xi_1}(v)(s)$ is the greatest expectation of $v$ that
player~1 can guarantee at a successor state of $s$ by playing the 
selector $\xi_1$. 
Note that all of these operators on valuations are monotonic: 
for two valuations $v, w$, if $v \leq w$,
then for all selectors $\xi_1 \in \Sel_1$ and $\xi_2 \in \Sel_2$, we have 
$\Pre_{\xi_1,\xi_2}(v) \leq \Pre_{\xi_1,\xi_2}(w)$, 
$\Pre_{1:\xi_1}(v) \leq \Pre_{1:\xi_1}(w)$, 
and $\Pre_1(v) \leq \Pre_1(w)$. 
Given a valuation $v$ and a state $s$, we define by
\[
\OptSel(v,s) =\set{\xi_1 \in \Sel_1(s) \mid \Pre_{1:\xi_1}(v)(s) =\Pre_1(v)(s)}
\]
the set of optimal selectors for $v$ at state $s$.
For an optimal selector $\xi_1 \in \OptSel(v,s)$, we define the set
of counter-optimal actions as follows:
\[
\CountOpt(v,s,\xi_1) =\set{ b \in \mov_2(s) \mid 
\Pre_{\xi_1,b}(v)(s) =\Pre_1(v)(s)}.
\]
Observe that for $\xi_1 \in \OptSel(v,s)$, for all $b \in 
\mov_2(s) \setminus \CountOpt(v,s,\xi_1)$ we have 
$\Pre_{\xi_1,b}(v)(s) > \Pre_1(v)(s)$.
We define the set of optimal selector support and the counter-optimal 
action set as follows:
\[
\begin{array}{rcl}
\OptSelCount(v,s) & = & \set{(A,B) \subs \mov_1(s) \times \mov_2(s) \mid
\exists \xi_1 \in \Sel_1(s). \ \xi_1 \in \OptSel(v,s) \\
 & \land & 
\supp(\xi_1)=A \ \land \ \CountOpt(v,s,\xi_1)=B
};
\end{array}
\]
i.e., it consists of pairs $(A,B)$ of actions of player~1 and player~2,
such that there is an optimal selector $\xi_1$ with support $A$,
and $B$ is the set of counter-optimal actions to $\xi_1$.

\medskip\noindent{\bf Turn-based reduction.} Given a concurrent 
game $G=\langle S,\moves,\mov_1,\mov_2, \trans \rangle $ and 
a valuation $v$ we construct a turn-based stochastic game
$\ov{G}_v=\langle (\ov{S},\ov{E}), (\ov{S}_1,\ov{S}_2,\ov{S}_R),\ov{\trans}
\rangle$ as follows:
\begin{enumerate}
\item The set of states is as follows:
\[
\begin{array}{rcl}
\ov{S}& = & S \cup \set{(s,A,B) \mid s\in S, \ (A,B) \in \OptSelCount(v,s)} \\
	&\cup & \set{(s,A,b) \mid s \in S, \ (A,B) \in \OptSelCount(v,s), \ b \in B}.
\end{array}
\]

\item The state space partition is as follows: 
$\ov{S}_1=S$; $\ov{S}_2=\set{(s,A,B) \mid s \in S, (A,B) \in \OptSelCount(v,s)}$;
and $\ov{S}_R=\ov{S} \setminus (\ov{S}_1 \cup \ov{S}_2).$

\item The set of edges is as follows:
\[ 
\begin{array}{rcl}
\ov{E} & = & \set{(s,(s,A,B)) \mid s \in S, (A,B) \in \OptSelCount(v,s)} \\
	& \cup & \set{((s,A,B),(s,A,b)) \mid b \in B} 
	\cup \set{((s,A,b),t) \mid \displaystyle t \in \bigcup_{a \in A} \dest(s,a,b)}.
\end{array}
\]

\item The transition function $\ov{\trans}$ for all states in $\ov{S}_R$ 
is uniform over its successors.
\end{enumerate}
Intuitively, the reduction is as follows.
Given the valuation $v$, state $s$ is a player~1 state where
player~1 can select a pair $(A,B)$ (and move to
state $(s,A,B)$) with $A \subs \mov_1(s)$ 
and $B \subs \mov_2(s)$ such that there is an optimal 
selector $\xi_1$ with support exactly $A$ and the set of
counter-optimal actions to $\xi_1$ is the set $B$.
From a player~2 state $(s,A,B)$, player~2 can choose any action
$b$ from the set $B$, and move to state $(s,A,b)$.
A state $(s,A,b)$ is a probabilistic state where all the 
states in $\bigcup_{a\in A} \dest(s,a,b)$ are chosen 
uniformly at random.
Given a set $F \subseteq S$ we denote by $\ov{F}= F \cup 
\set{(s,A,B) \in\ov{S} \mid s \in F} \cup 
\set{(s,A,b) \in\ov{S} \mid s \in F}$.
We refer to the above reduction as $\TB$, i.e., 
$(\ov{G}_v,\ov{F})=\TB(G,v,F)$.

\medskip\noindent{\bf Value-class of a valuation.}
Given a valuation $v$ and a real $0\leq r \leq 1$, 
the \emph{value-class} $U_r(v)$ of value $r$ is the set of 
states with valuation $r$, i.e., 
$U_r(v)=\set{s \in S \mid v(s)=r}$

\subsection{The strategy improvement algorithm} 

\paragraph{Ordering of strategies.}
Let $G$ be a concurrent game and $F$ be the set of safe states.
Let $T =S \setminus F$.
Given a concurrent game graph $G$ with a safety objective $\Safe(F)$,
the set of \emph{almost-sure winning} states is the set of states $s$ such that
the value at $s$ is~$1$, i.e., $W_1=\set{s\in S \mid \va(\Safe(F))=1}$
is the set of almost-sure winning states.
An optimal strategy from $W_1$ is referred as an almost-sure winning 
strategy.
The set $W_1$ and an almost-sure winning strategy can be computed in 
linear time by the algorithm given in~\cite{dAH00}.
We assume without loss of generality that all states in $W_1 \union T$
are absorbing. 
We define a preorder $\prec$ on the strategies for player 1 as follows:
given two player 1 strategies $\stra_1$ and $\stra_1'$, let
$\stra_1 \prec \stra_1'$ if the following two conditions hold:
(i)~$\vas{\stra_1}(\Safe(F)) \leq \vas{\stra_1'}(\Safe(F))$; and 
(ii)~$\vas{\stra_1}(\Safe(F))(s) < \vas{\stra_1'}(\Safe(F))(s)$ 
for some state $s\in S$.
Furthermore, we write 
$\stra_1 \preceq \stra_1'$ if either $\stra_1 \prec \stra_1'$ or 
$\stra_1 = \stra_1'$.
We first present an example that shows the improvements 
based only on $\Pre_1$ operators are not sufficient for 
safety games, even on turn-based games and then present our algorithm.

\begin{example}\label{examp:conc-safety}
Consider the turn-based stochastic game shown in Fig~\ref{fig:example-tbs}, where
the $\Box$ states are player~1 states, the $\Diamond$ states are 
player~2 states, and $\bigcirc$ states are random states with probabilities
labeled on edges.
The safety goal is to avoid the state $s_6$. 
Consider a memoryless strategy $\stra_1$ for player~1 that chooses the successor $s_0 \to
s_2$, and the counter-strategy $\stra_2$ for player~2 chooses $s_1 \to s_0$.
Given the strategies $\stra_1$ and $\stra_2$, the value at 
$s_0,s_1$ and $s_2$ is $1/3$, and since all
successors of $s_0$ have value $1/3$, the value cannot be improved by $\Pre_1$.
However, note that if player~2 is restricted to choose only value optimal 
selectors for the value $1/3$, then player~1 can switch to the strategy 
$s_0\to s_2$ and ensure that the game stays in the value class $1/3$ 
with probability~1.
Hence switching to $s_0 \to s_2$ would force player~2 to select a 
counter-strategy that switches to the strategy $s_1 \to s_3$, and
thus player~1 can get a value $2/3$.
\begin{figure}[t]
   \begin{center}
      \setlength{\unitlength}{0.00033333in}
\begingroup\makeatletter\ifx\SetFigFont\undefined%
\gdef\SetFigFont#1#2#3#4#5{%
  \reset@font\fontsize{#1}{#2pt}%
  \fontfamily{#3}\fontseries{#4}\fontshape{#5}%
  \selectfont}%
\fi\endgroup%
{\renewcommand{\dashlinestretch}{30}
\begin{picture}(6880,2159)(0,-10)
\put(6304,952){\makebox(0,0)[lb]{{\SetFigFont{8}{9.6}{\rmdefault}{\mddefault}{\updefault}2/3}}}
\thicklines
\put(6603.464,359.500){\arc{521.746}{4.0303}{8.5361}}
\blacken\path(6662.746,667.536)(6439.000,562.000)(6686.087,549.828)(6662.746,667.536)
\put(724,1822){\ellipse{600}{600}}
\put(6124,1822){\ellipse{600}{600}}
\path(4624,2122)(4324,1822)(4624,1522)
	(4924,1822)(4624,2122)
\path(1924,2122)(2524,2122)(2524,1522)
	(1924,1522)(1924,2122)
\path(424,622)(1024,622)(1024,22)
	(424,22)(424,622)
\path(5824,622)(6424,622)(6424,22)
	(5824,22)(5824,622)
\path(1924,1822)(1024,1822)
\blacken\path(1264.000,1882.000)(1024.000,1822.000)(1264.000,1762.000)(1264.000,1882.000)
\path(4924,1822)(5824,1822)
\blacken\path(5584.000,1762.000)(5824.000,1822.000)(5584.000,1882.000)(5584.000,1762.000)
\path(724,1522)(724,622)
\blacken\path(664.000,862.000)(724.000,622.000)(784.000,862.000)(664.000,862.000)
\path(724,1522)(5824,322)
\blacken\path(5576.638,318.564)(5824.000,322.000)(5604.122,435.374)(5576.638,318.564)
\path(6124,1522)(1024,322)
\blacken\path(1243.878,435.374)(1024.000,322.000)(1271.362,318.564)(1243.878,435.374)
\path(6124,1522)(6124,622)
\blacken\path(6064.000,862.000)(6124.000,622.000)(6184.000,862.000)(6064.000,862.000)
\path(2524,1957)(4399,1957)
\blacken\path(4159.000,1897.000)(4399.000,1957.000)(4159.000,2017.000)(4159.000,1897.000)
\path(4339,1717)(2539,1717)
\blacken\path(2779.000,1777.000)(2539.000,1717.000)(2779.000,1657.000)(2779.000,1777.000)
\put(1999,1732){\makebox(0,0)[lb]{{\SetFigFont{8}{9.6}{\rmdefault}{\mddefault}{\updefault}$s_0$}}}
\put(4429,1717){\makebox(0,0)[lb]{{\SetFigFont{8}{9.6}{\rmdefault}{\mddefault}{\updefault}$s_1$}}}
\put(6004,1702){\makebox(0,0)[lb]{{\SetFigFont{8}{9.6}{\rmdefault}{\mddefault}{\updefault}$s_3$}}}
\put(574,1732){\makebox(0,0)[lb]{{\SetFigFont{8}{9.6}{\rmdefault}{\mddefault}{\updefault}$s_2$}}}
\put(499,202){\makebox(0,0)[lb]{{\SetFigFont{8}{9.6}{\rmdefault}{\mddefault}{\updefault}$s_6$}}}
\put(5929,202){\makebox(0,0)[lb]{{\SetFigFont{8}{9.6}{\rmdefault}{\mddefault}{\updefault}$s_5$}}}
\put(169,937){\makebox(0,0)[lb]{{\SetFigFont{8}{9.6}{\rmdefault}{\mddefault}{\updefault}2/3}}}
\put(1444,937){\makebox(0,0)[lb]{{\SetFigFont{8}{9.6}{\rmdefault}{\mddefault}{\updefault}1/3}}}
\put(4939,937){\makebox(0,0)[lb]{{\SetFigFont{8}{9.6}{\rmdefault}{\mddefault}{\updefault}1/3}}}
\put(260.205,354.411){\arc{486.746}{1.0609}{5.3702}}
\blacken\path(163.173,519.269)(409.000,547.000)(179.044,638.215)(163.173,519.269)
\end{picture}
}
   \end{center}
  \caption{A turn-based stochastic safety game.}
  \label{fig:example-tbs}
\end{figure}
\qed
\end{example}

\paragraph{Informal description of Algorithm~\ref{algorithm:strategy-improve-safe}.}
We now present the strategy improvement algorithm 
(Algorithm~\ref{algorithm:strategy-improve-safe}) for computing the values for 
all states in $S \setminus W_1$.
The algorithm iteratively improves player-1 strategies according to the 
preorder $\prec$. 
The algorithm starts with the random selector 
$\gamma_0=\overline{\xi}_1^\unif$ that plays at all states all actions
uniformly at random.
At iteration $i+1$, the algorithm considers the memoryless player-1 strategy
$\overline{\gamma}_i$ and computes the value $\vas{\overline{\gamma}_i}(\Safe(F))$.
Observe that since $\overline{\gamma}_i$ is a memoryless strategy, the
computation of $\vas{\overline{\gamma}_i}(\Safe(F))$ involves solving the 2-MDP 
$G_{{\gamma}_i}$.
The valuation $\vas{\overline{\gamma}_i}(\Safe(F))$ is named $v_i$.
For all states $s$ such that $\Pre_1(v_i)(s) > v_i(s)$, 
the memoryless strategy at $s$ is modified to a selector 
that is value-optimal for $v_i$. 
The algorithm then proceeds to the next iteration.
If $\Pre_1(v_i) = v_i$, then the algorithm constructs the 
game $(\ov{G}_{v_i},\ov{F})=\TB(G,v_i,F)$, and computes
$\ov{A}_i$ as the set of almost-sure winning states in $\ov{G}_{v_i}$
for the objective $\Safe(\ov{F})$.
Let $U=(\ov{A}_i \cap S) \setminus W_1$.
If $U$ is non-empty, then a selector $\gamma_{i+1}$ is obtained at $U$ 
from an pure memoryless optimal strategy (i.e.,
an almost-sure winning strategy) in $\ov{G}_{v_i}$, and
the algorithm proceeds to iteration $i+1$.
If $\Pre_1(v_i)=v_i$ and $U$ is empty, then the algorithm stops and returns 
the memoryless strategy $\overline{\gamma}_i$ for player~1.
Unlike strategy improvement algorithms for turn-based games (see
\cite{Con93} for a survey),
Algorithm~\ref{algorithm:strategy-improve-safe} is not guaranteed to
terminate, because the value of a safety game may not be
rational. 

\begin{algorithm*}[t]
\caption{Safety Strategy-Improvement Algorithm}
\label{algorithm:strategy-improve-safe}
{
\begin{tabbing}
aaa \= aaa \= aaa \= aaa \= aaa \= aaa \= aaa \= aaa \kill
\\
\> {\bf Input:} a concurrent game structure $G$ with safe set $F$. \\
\>   {\bf Output:} a strategy $\overline{\gamma}$ for player~1. \\ 

\> 0. Compute $W_1=\set{s \in S \mid \va(\Safe(F))(s)=1}$. \\
\> 1. Let $\gamma_0=\xi_1^\unif$ and $i=0$. \\
\> 2. Compute $v_0 = \vas{\overline{\gamma}_0}(\Safe(F))$. \\

\> 3. {\bf do \{ } \\ 
\>\> 3.1. Let $I= \set{s \in S \setminus (W_1 \cup T) \mid \Pre_1(v_i)(s) > v_i(s)}$. \\
\>\> 3.2 {\bf if} $I \neq \emptyset$, {\bf then} \\
\>\>\> 3.2.1 Let $\xi_1$ be a player-1 
        selector such that for all states $s \in I$, \\
\>\>\>\>	we have $\Pre_{1:\xi_1}(v_i)(s) =\Pre_1(v_i)(s) > v_i(s)$.\\
\>\>\> 3.2.2 The player-1 selector $\gamma_{i+1}$ is defined
          as follows: for each state $t\in S$, let\\
\>\>\>\> $	\displaystyle 
	\gamma_{i+1}(t)=
	\begin{cases}
	\gamma_i(t) & \text{\ if \ }s\not\in I;\\
	\xi_1(s) & \text{\ if\ }s\in I.
	\end{cases}$
	\\ 
\>\> 3.3 {\bf else}  \\
\>\>\> 3.3.1 let$(\ov{G}_{v_i},\ov{F})=\TB(G,v_i,F)$ \\
\>\>\> 3.3.2 let $\ov{A}_i$ be the set of almost-sure winning states in $\ov{G}_{v_i}$
	for $\Safe(\ov{F})$ and \\
\>\>\>\> $\ov{\stra}_1$ be a pure memoryless almost-sure winning strategy from the set $\ov{A}_i$.\\
\>\>\> 3.3.3 {\bf if} ($(\ov{A}_i \cap S) \setminus W_1 \neq \emptyset$) \\
\>\>\>\> 3.3.3.1 let $U= (\ov{A}_i \cap S)\setminus W_1$ \\
\>\>\>\> 3.3.3.2 The player-1 selector $\gamma_{i+1}$ is defined
          as follows: for $t\in S$, let\\
\>\>\>\> $	\displaystyle 
	\gamma_{i+1}(t)=
	\begin{cases}
	\gamma_i(t) & \text{\ if \ }s\not\in U;\\
	\xi_1(s) & \text{\ if\ }s\in U, \xi_1(s) \in \OptSel(v_i,s), \\
	&\ \ \ov{\stra}_1(s)=(s,A,B), B=\OptSelCount(s,v,\xi_1).
	\end{cases}$
	\\
\>\> 3.4. Compute $v_{i+1} =\vas{\overline{\gamma}_{i+1}}(\Safe(F))$. \\
\>\> 3.5. Let $i=i+1$. \\
\> {\bf \} until } $I=\emptyset$ and $(\ov{A}_{i-1} \cap S) \setminus W_1=\emptyset$. \\
\> 4. {\bf return} $\overline{\gamma}_{i}$.  
\end{tabbing}
}
\end{algorithm*}

\begin{lemma}\label{lemm:stra-improve-safe1}
Let $\gamma_i$ and $\gamma_{i+1}$ be the player-1 selectors obtained at 
iterations $i$ and $i+1$ of Algorithm~\ref{algorithm:strategy-improve-safe}.
Let $I=\set{s \in S \setminus (W_1 \cup T) \mid \Pre_1(v_i)(s) > v_i(s)}$. 
Let $v_i=\vas{\overline{\gamma}_i}(\Safe(F))$ and 
$v_{i+1}=\vas{\overline{\gamma}_{i+1}}(\Safe(F))$.
Then $v_{i+1}(s)  \geq  \Pre_1(v_i)(s)$ for all states $s\in S$;
and therefore
$v_{i+1}(s) \geq v_i(s)$ for all states $s\in S$, 
and $v_{i+1}(s) > v_i(s)$ for all states $s\in I$.
\end{lemma}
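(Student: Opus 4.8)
The plan is to reduce both displayed conclusions to the single inequality $v_{i+1}(s)\geq\Pre_1(v_i)(s)$ for all $s\in S$. Since $\overline{\gamma}_i$ is memoryless, $v_i=\vas{\overline{\gamma}_i}(\Safe(F))$ is the safety value of the $2$-MDP $G_{\gamma_i}$, and hence satisfies the optimality equation $v_i(s)=\Pre_{1:\gamma_i}(v_i)(s)$ at every state (using that $W_1\cup T$ consists of absorbing states, so $v_i\equiv 1$ on $W_1$, $v_i\equiv 0$ on $T=S\setminus F$, and these values are trivially fixed; see e.g.\ \cite{FV97}); consequently $\Pre_1(v_i)=\sup_{\xi_1\in\Sel_1}\Pre_{1:\xi_1}(v_i)\geq\Pre_{1:\gamma_i}(v_i)=v_i$. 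Granting the single inequality above, then, $v_{i+1}\geq\Pre_1(v_i)\geq v_i$, and for $s\in I$ we get $v_{i+1}(s)\geq\Pre_1(v_i)(s)>v_i(s)$ by the definition of $I$. Write $w:=\Pre_1(v_i)$, so $w\geq v_i$ and $w\equiv 0$ on $T$.

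The core of the argument is to show that $w$ is a post-fixpoint of the safety operator of $G_{\gamma_{i+1}}$, i.e.\ $\Pre_{1:\gamma_{i+1}}(w)(s)\geq w(s)$ for all $s\in F$. I would prove the sharper pointwise identity $\Pre_{1:\gamma_{i+1}}(v_i)=w$ and then conclude by monotonicity. Let $J$ be the set of states at which $\gamma_{i+1}$ differs from $\gamma_i$. At a state $s\in J$, inspection of steps~3.2.2 and~3.3.3.2 shows that $\gamma_{i+1}(s)$ is value-optimal for $v_i$ at $s$ (in the branch $I\neq\emptyset$ because step~3.2.1 chooses it with $\Pre_{1:\xi_1}(v_i)(s)=\Pre_1(v_i)(s)$, in the branch $I=\emptyset$ because $\xi_1(s)\in\OptSel(v_i,s)$ by construction), whence $\Pre_{1:\gamma_{i+1}}(v_i)(s)=\Pre_1(v_i)(s)=w(s)$. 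At a state $s\notin J$ we have $\gamma_{i+1}(s)=\gamma_i(s)$, so $\Pre_{1:\gamma_{i+1}}(v_i)(s)=\Pre_{1:\gamma_i}(v_i)(s)=v_i(s)$, and there $w(s)=v_i(s)$ as well: if the iteration took the branch $I\neq\emptyset$, then $s\notin I$ together with $\Pre_1(v_i)\geq v_i$ forces $\Pre_1(v_i)(s)=v_i(s)$ for $s\in S\setminus(W_1\cup T)$, while absorption of $W_1\cup T$ gives the same there; if it took the branch $I=\emptyset$, then $\Pre_1(v_i)=v_i$ everywhere. Hence $\Pre_{1:\gamma_{i+1}}(v_i)=w$ identically, and since $w\geq v_i$ and $\Pre_{1:\gamma_{i+1}}$ is monotone, $\Pre_{1:\gamma_{i+1}}(w)\geq\Pre_{1:\gamma_{i+1}}(v_i)=w$.

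To close, I would invoke the characterization of the safety value of an MDP as a greatest post-fixpoint. Letting $\Lambda$ be the monotone operator on $[0,1]^S$ with $\Lambda(h)(s)=\Pre_{1:\gamma_{i+1}}(h)(s)$ for $s\in F$ and $\Lambda(h)(s)=0$ for $s\in T$, one has $v_{i+1}=\lim_{n\to\infty}\Lambda^n(\mathbf 1)$, where $\mathbf 1$ is the valuation identically $1$; this is the statement that value iteration for the safety objective converges from above (cf.\ \cite{dAM04}). Since $w\leq\Lambda(w)$ (the post-fixpoint property just established, together with $w\equiv 0$ on $T$) and $w\leq\mathbf 1$, monotonicity gives $w\leq\Lambda^n(w)\leq\Lambda^n(\mathbf 1)$ for all $n$, hence $w\leq v_{i+1}$. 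This is exactly $v_{i+1}(s)\geq\Pre_1(v_i)(s)$ for all $s$, completing the reduction.

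The step I expect to be the main obstacle is the uniform treatment of the two branches of the algorithm in the identity $\Pre_{1:\gamma_{i+1}}(v_i)=w$ — in particular checking that ``off $J$'' one genuinely has $w=v_i$ in both the $I\neq\emptyset$ and the $I=\emptyset$ cases, and handling the absorbing states $W_1\cup T$ correctly — together with stating the safety-value characterization of the intermediate MDPs precisely enough that the post-fixpoint inequality yields $w\leq v_{i+1}$.
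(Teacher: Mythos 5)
Your proof is correct and is essentially the paper's argument in dual form: where you show that $w=\Pre_1(v_i)$ is a post-fixpoint of the safety operator of the MDP $G_{\gamma_{i+1}}$ and invoke the greatest-fixpoint (value-iteration-from-above) characterization, the paper verifies exactly the complementary inequalities — that $1-\Pre_1(v_i)$ is a feasible solution of the player-2 reachability linear program in $G_{\gamma_{i+1}}$ — and invokes the least-solution property. The substantive computation (that $\gamma_{i+1}$ is value-optimal for $v_i$ at the states where it changes, and that $\Pre_1(v_i)=v_i$ at the states where it does not) is the same in both.
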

\begin{proof}
Consider the valuations $v_i$ and $v_{i+1}$ obtained at iterations $i$ and 
$i+1$, respectively, and let $w_i$ be the valuation defined by 
$w_i(s) = 1 - v_i(s)$ for all states $s \in S$. 
The counter-optimal strategy for player~2 to minimize
$v_{i+1}$ is obtained by maximizing the probability to reach $T$.
Let 
\[
  w_{i+1}(s) =
  \begin{cases} 
    w_i(s) & \text{\ if\ }s \in S \setminus I; \\
    1-\Pre_1(v_i)(s) < w_i(s) &\text{\ if\ }s \in I. 
  \end{cases}
\]
In other words, $w_{i+1} = 1 - \Pre_1(v_i)$, and we also have 
$w_{i+1} \leq w_i$.
We now show that $w_{i+1}$ is a feasible solution to the 
linear program for MDPs with the objective $\Reach(T)$, as
described in Section~\ref{sec:mdp}.
Since $v_i =\vas{\overline{\gamma}_i}(\Safe(F))$, it follows that
for all states $s\in S$ and all moves $a_2 \in \mov_2(s)$, we have
\[
  w_i(s) \geq \sum_{t \in S} w_i(t) \cdot \trans_{\gamma_i}(s,a_2).  
\]
For all states $s \in S \setminus I$, 
we have $\gamma_i(s)=\gamma_{i+1}(s)$ and
$w_{i+1}(s) = w_i(s)$,
and since $w_{i+1} \leq w_i$, it follows that for all 
states $s \in S \setminus I$ and all moves $a_2 \in \mov_2(s)$, we have
\[
  w_{i+1}(s)=w_i(s) \geq \sum_{t \in S} w_{i+1}(t) \cdot \trans_{\gamma_{i+1}}(s,a_2) 
\qquad \text{( for  $s\in S\setm I$)}.
\]

Since for $s \in I$ the selector $\gamma_{i+1}(s)$ is obtained as an
optimal selector for $\Pre_1(v_i)(s)$, it follows that 
for all states $s\in I$ and all moves $a_2 \in \mov_2(s)$, we have
\[
\Pre_{\gamma_{i+1},a_2}(v_i)(s) \geq \Pre_1(v_i)(s);
\]
in other words, $1-\Pre_1(v_i)(s) \geq 1-\Pre_{\gamma_{i+1},a_2}(v_i)(s)$.
Hence for all states $s\in I$ and all moves $a_2 \in \mov_2(s)$, we have
\[
  w_{i+1}(s) \geq \sum_{t \in S} w_{i}(t) \cdot \trans_{\gamma_{i+1}}(s,a_2). 
\]
Since $w_{i+1} \leq w_i$,  
for all states $s\in I$ and all moves $a_2 \in \mov_2(s)$, we have
\[
  w_{i+1}(s) \geq \sum_{t \in S} w_{i+1}(t) \cdot \trans_{\gamma_{i+1}}(s,a_2) 
\qquad \text{( for  $s\in I$)}.
\] 
Hence it follows that $w_{i+1}$ is a feasible solution to the 
linear program for MDPs with reachability objectives.
Since the reachability valuation 
for player~2 for $\Reach(T)$ is the least
solution (observe that the objective function of the linear program is a 
minimizing function), it follows that 
$v_{i+1} \geq 1-w_{i+1} = \Pre_1(v_i)$.
Thus we obtain
$v_{i+1}(s) \geq v_i(s)$ for all states $s \in S$, and 
$v_{i+1}(s) > v_i(s)$ for all states $s \in I$.
\qed
\end{proof}

Recall that by Example~\ref{examp:conc-safety} it follows that 
improvement by only step~3.2 is not sufficient to guarantee
convergence to optimal values.
We now present a lemma about the turn-based reduction,
and then show that step 3.3 also leads to an improvement.
Finally, in Theorem~\ref{thrm:safe-termination} we show that if
improvements by step 3.2 and step 3.3 are not possible, then 
the optimal value and an optimal strategy is obtained.

\begin{lemma}\label{lemm:stra-improve-safetb}
Let $G$ be a concurrent game with a set $F$ of safe states.
Let $v$ be a valuation and 
consider $(\ov{G}_v,\ov{F})=\TB(G,v,F)$.
Let $\ov{A}$ be the set of almost-sure winning states in $\ov{G}_v$
for the objective $\Safe(\ov{F})$, and let $\ov{\stra}_1$ be a
pure memoryless almost-sure winning strategy from $\ov{A}$ in
$\ov{G}_v$.
Consider a memoryless strategy $\stra_1$ in $G$ for 
states in $\ov{A}\cap S$ as follows:
if $\ov{\stra}_1(s)=(s,A,B)$, then $\stra_1(s) \in 
\OptSel(v,s)$ such that $\supp(\stra_1(s))=A$ and $\OptSelCount(v,s,\stra_1(s))=B$.
Consider a pure memoryless strategy $\stra_2$ 
for player~2.
If for all states $s \in \ov{A} \cap S$, we have 
$\stra_2(s) \in \OptSelCount(v,s,\stra_1(s))$, then 
for all $s \in \ov{A} \cap S$, we have 
$\Prb_s^{\stra_1,\stra_2}(\Safe(F))=1$. 
\end{lemma}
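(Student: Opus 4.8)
The plan is to show that under the stated hypotheses the play, starting from any $s \in \ov{A}\cap S$, almost surely stays inside $\ov{A}\cap S$ and never hits $T$. The key observation is that $\ov{\stra}_1$ being a pure memoryless almost-sure winning strategy for $\Safe(\ov{F})$ in the turn-based game $\ov{G}_v$ means: from every $\ov{A}$-state, every path consistent with $\ov{\stra}_1$ stays in $\ov{F}$, and moreover stays in $\ov{A}$ (the almost-sure winning region is closed under the moves dictated by an almost-sure winning strategy, and against all player-2 choices and all random resolutions). I would first translate this combinatorial fact about $\ov{G}_v$ into a statement about one step of the concurrent game $G$ under $(\stra_1,\stra_2)$: fix $s \in \ov{A}\cap S$, let $\ov{\stra}_1(s) = (s,A,B)$, so that $\supp(\stra_1(s)) = A$ and $\CountOpt(v,s,\stra_1(s)) = B$, and $\stra_2(s) = b \in B$. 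In $\ov{G}_v$ the corresponding play fragment is $s \to (s,A,B) \to (s,A,b) \to t$ where $t$ ranges uniformly over $\bigcup_{a\in A}\dest(s,a,b)$; since $\ov{\stra}_1$ is almost-sure winning, every such $t$ lies in $\ov{A}$, hence $t \in \ov{A}\cap S$ (using that $\ov{S}_1 = S$, so the states of $\ov{G}_v$ reachable in one "round" from $S$ are again in $S$), and also $t \in \ov{F}$, hence $t \in F$. But $\bigcup_{a\in A}\dest(s,a,b) = \dest(s, \stra_1(s), b)$ is exactly the support of the one-step distribution of $G$ from $s$ under $\stra_1$ and player-2 move $b$. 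Therefore from any $s\in \ov{A}\cap S$, under $(\stra_1,\stra_2)$, with probability one the successor lies in $\ov{A}\cap S \subseteq F$.

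By induction on the length of the play, this shows that every finite prefix of a play from $s$ under $(\stra_1,\stra_2)$ stays within $\ov{A}\cap S$, and in particular within $F$; hence every infinite play stays in $F$, giving $\Prb_s^{\stra_1,\stra_2}(\Safe(F)) = 1$. The only subtlety is that $\stra_2$ is an arbitrary \emph{pure memoryless} player-2 strategy subject only to $\stra_2(s)\in \CountOpt(v,s,\stra_1(s)) = B$ at states $s\in \ov{A}\cap S$; but that is precisely the constraint needed so that $\stra_2(s) = b$ is a legal move at the player-2 state $(s,A,B)$ of $\ov{G}_v$, so the simulation of the concurrent play by a play in $\ov{G}_v$ goes through at every step. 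For states where $\stra_1$ is defined I only need the round $s \to (s,A,B)\to(s,A,b)\to t$ to be consistent with $\ov{\stra}_1$, which it is by construction of $\stra_1$ from $\ov{\stra}_1$.

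The main obstacle, and the step to be careful about, is justifying that the almost-sure winning region $\ov{A}$ of the turn-based safety game is genuinely \emph{invariant}: that from an $\ov{A}$-state, following $\ov{\stra}_1$, one can never leave $\ov{A}$ regardless of player-2 and of the random choices — not merely that one stays in $\ov{F}$ with probability one. This is a standard property of almost-sure winning sets for safety in turn-based stochastic games (if a positive-probability transition led outside $\ov{A}$, then from that successor player 2 could violate safety with positive probability, contradicting that the original state was almost-sure winning), and I would invoke it via Theorem~\ref{thrm:memory-determinacy}(3) together with the characterization of $\ov{A}$ as the almost-sure region; I would state it explicitly as the pivotal fact. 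Once that invariance is in hand, the rest is the routine one-step simulation and induction sketched above.
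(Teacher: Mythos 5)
Your proof is correct and takes essentially the same route as the paper: both arguments lift $\stra_2$ to a player-2 strategy in $\ov{G}_v$ (legal precisely because $\stra_2(s)\in\OptSelCount(v,s,\stra_1(s))=B$), identify the one-step distribution of $G$ under $(\stra_1(s),b)$ with the random branching at $(s,A,b)$, and then exploit that $\ov{\stra}_1$ is almost-sure winning for $\Safe(\ov{F})$ on $\ov{A}$. The only difference is in how that last fact is cashed in --- the paper analyzes the closed recurrent classes of the product Markov chain, while you make explicit the invariance of $\ov{A}$ under $\ov{\stra}_1$ and conclude by induction that the play \emph{surely} stays in $\ov{A}\cap S\subseteq F$ --- which is a correct and, if anything, slightly more self-contained finish.
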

\begin{proof}
We analyze the Markov chain arising after the player fixes
the memoryless strategies $\stra_1$ and $\stra_2$.
Given the strategy $\stra_2$ consider the strategy $\ov{\stra}_2$
as follows: if $\ov{\stra}_1(s)=(s,A,B)$ and $\stra_2(s)=b \in 
\OptSelCount(v,s,\stra_1(s))$, then at state $(s,A,B)$ choose
the successor $(s,A,b)$. 
Since $\ov{\stra}_1$ is an almost-sure winning strategy for 
$\Safe(\ov{F})$, it follows that in the Markov chain obtained by 
fixing $\ov{\stra}_1$ and $\ov{\stra}_2$ in $\ov{G}_v$, 
all closed connected recurrent set of states that intersect 
with $\ov{A}$ are contained in $\ov{A}$, and from all 
states of $\ov{A}$ the closed connected recurrent set of states 
within $\ov{A}$ are reached with probability~1. 
It follows that in the Markov chain obtained from fixing 
$\stra_1$ and $\stra_2$ in $G$ 
all closed connected recurrent set of states that intersect 
with $\ov{A}\cap S $ are contained in $\ov{A} \cap S$, and from all 
states of $\ov{A}\cap S$ the closed connected recurrent set of states 
within $\ov{A} \cap S$ are reached with probability~1. 
The desired result follows. 
\qed
\end{proof}

\begin{lemma}\label{lemm:stra-improve-safe2}
Let $\gamma_i$ and $\gamma_{i+1}$ be the player-1 selectors obtained at 
iterations $i$ and $i+1$ of Algorithm~\ref{algorithm:strategy-improve-safe}.
Let $I=\set{s \in S \setminus (W_1 \cup T) \mid \Pre_1(v_i)(s) > v_i(s)}=\emptyset$,
and $(\ov{A}_i \cap S)\setminus W_1 \neq \emptyset$.  
Let $v_i=\vas{\overline{\gamma}_i}(\Safe(F))$ and 
$v_{i+1}=\vas{\overline{\gamma}_{i+1}}(\Safe(F))$.
Then 
$v_{i+1}(s) \geq v_i(s)$ for all states $s\in S$, 
and $v_{i+1}(s) > v_i(s)$ for some state $s\in (\ov{A}_i \cap S) \setminus W_1$.
\end{lemma}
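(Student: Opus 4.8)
The plan is to show two things: monotonicity ($v_{i+1} \geq v_i$ everywhere), and strict improvement somewhere in $U = (\ov{A}_i \cap S) \setminus W_1$. The first part is the ``easy'' direction and I would handle it exactly as in Lemma~\ref{lemm:stra-improve-safe1}: since $I = \emptyset$, we have $\Pre_1(v_i) = v_i$, so $v_i$ itself certifies that $w_i = 1 - v_i$ satisfies the MDP reachability constraints for the transition function $\trans_{\gamma_{i+1}}$ at every state outside $U$ (where $\gamma_{i+1} = \gamma_i$). At states in $U$ the selector $\gamma_{i+1}(s)$ is chosen inside $\OptSel(v_i,s)$, hence $\Pre_{\gamma_{i+1},a_2}(v_i)(s) \geq \Pre_1(v_i)(s) = v_i(s)$ for all $a_2 \in \mov_2(s)$, so $w_i$ again satisfies the constraint there. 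Thus $w_i$ is feasible for the player-2 reachability LP in the MDP $G_{\gamma_{i+1}}$, and since the true reachability value is the least feasible solution, $1 - v_{i+1} \leq w_i$, i.e.\ $v_{i+1} \geq v_i$ on all of $S$.

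For the strict improvement, the key is Lemma~\ref{lemm:stra-improve-safetb}. Let $r$ range over the value-classes of $v_i$ that meet $U$, and pick the \emph{smallest} such $r$ with $r < 1$; I expect the argument to localize to states $s \in U$ with $v_i(s) = r$. By construction of the turn-based reduction and Lemma~\ref{lemm:stra-improve-safetb}, playing $\gamma_{i+1}$ from such $s$, \emph{if} player~2 is confined to counter-optimal actions $\CountOpt(v_i,s,\gamma_{i+1}(s))$, keeps the play inside $\ov{A}_i \cap S$ with probability~1, hence inside $\Safe(F)$ with probability~1 — giving value $1 > r$ there, a contradiction with $v_{i+1} \geq v_i$ only if player~2 has a better deviation. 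So the real content is: any deviation by player~2 to a \emph{non}-counter-optimal action $b \notin \CountOpt(v_i,s,\gamma_{i+1}(s))$ satisfies $\Pre_{\gamma_{i+1},b}(v_i)(s) > \Pre_1(v_i)(s) = v_i(s)$ (this strict inequality is exactly the observation recorded after the definition of $\CountOpt$). I would combine these: under $\gamma_{i+1}$, from any $s \in U$ with $v_i(s) = r$, player~2 either (a) stays counter-optimal and is trapped in $\Safe(F)$ a.s., or (b) deviates and moves to a state of strictly higher $v_i$-value with positive probability. A standard argument on the finite Markov chain $G_{\gamma_{i+1},\stra_2}$ — partitioning recurrent classes into those contained in $U \cap U_r(v_i)$ (which are safe, contributing value $1$) versus those escaping upward (contributing value $\geq$ the next value-class $r' > r$) — shows $v_{i+1}(s) = \winval{1}^{\ov\gamma_{i+1}}(\Safe(F))(s) > r = v_i(s)$ for at least one $s \in U$, in fact for all $s$ in the lowest such value-class.

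The main obstacle is making the ``escape upward'' argument rigorous: I must argue that for \emph{every} memoryless player-2 strategy $\stra_2$ (and memoryless suffices against a fixed memoryless $\gamma_{i+1}$, by Theorem~\ref{thrm:memory-determinacy} specialized to the induced MDP), the value $\Prb_s^{\ov\gamma_{i+1},\stra_2}(\Safe(F))$ is strictly above $r$ on the chosen value-class. The delicate point is that player~2 might mix counter-optimal moves (staying in $U_r$) in a way that creates a recurrent class entirely inside $U_r(v_i) \cap U$ but \emph{not} inside $\ov{A}_i$; Lemma~\ref{lemm:stra-improve-safetb} is precisely what rules this out, since it guarantees the recurrent class lands in $\ov{A}_i \cap S$ whenever player~2 is counter-optimal, and $\ov{A}_i \subseteq \ov{F}$ means those states are safe forever. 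So the proof reduces to a careful bookkeeping over value-classes using Lemma~\ref{lemm:stra-improve-safetb} for the ``trapped'' case and the strict inequality on $\CountOpt$-complements together with $v_{i+1} \geq v_i$ (already proved) for the ``escape'' case; I would write this as: fix the minimal $r<1$ with $U \cap U_r(v_i) \neq \emptyset$, take $s$ in it, and show every player-2 response yields safety probability $> r$ from $s$, concluding $v_{i+1}(s) > v_i(s)$.
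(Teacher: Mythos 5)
Your monotonicity argument and the first half of your strict-improvement argument match the paper: the paper likewise certifies $w_i=1-v_i$ as a feasible solution of the player-2 reachability LP for $G_{\gamma_{i+1}}$ to get $v_{i+1}\geq v_i$, and it likewise fixes a pure memoryless best response $\stra_2$ to $\ov{\gamma}_{i+1}$ and invokes Lemma~\ref{lemm:stra-improve-safetb} together with $U\cap W_1=\emptyset$ to force $\stra_2(s^*)\notin\CountOpt(v_i,s^*,\gamma_{i+1}(s^*))$ at some $s^*\in U$. The gap is in how you turn this forced deviation into a strict increase of value. Your mechanism --- deviation ``moves to a state of strictly higher $v_i$-value,'' so recurrent classes either sit inside $U\cap U_r(v_i)$ (contributing $1$) or ``escape upward'' (contributing at least the next value-class $r'>r$) --- does not work as stated. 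The inequality $\Pre_{\gamma_{i+1},b}(v_i)(s)>v_i(s)$ for non-counter-optimal $b$ controls only the \emph{expectation} of $v_i$ over successors; $\dest(s,\gamma_{i+1}(s),b)$ may contain states of strictly \emph{lower} value-classes, so there is no monotone ascent through value-classes. Moreover, a recurrent class of the chain $G_{\gamma_{i+1},\stra_2}$ contributes safety probability $0$ or $1$, never an intermediate quantity like ``the next value-class $r'$,'' so the proposed partition does not compute $\Prb_s^{\ov{\gamma}_{i+1},\stra_2}(\Safe(F))$. (Your worry about recurrent classes inside $U_r(v_i)\cap U$ but outside $\ov{A}_i$ is also vacuous, since $U\subseteq \ov{A}_i$ by definition.) Finally, the lemma only requires improvement at \emph{some} state of $U$, and the deviation point need not lie in the minimal value-class of $U$, so the plan of proving improvement at every state of the lowest class is both stronger than needed and not established by the sketch.

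The paper closes the argument with no value-class bookkeeping at all: at the deviation state $s^*$ it sets $z(s^*)=\sum_{t}w_i(t)\cdot\trans_{\gamma_{i+1}}(s^*,\stra_2(s^*))(t)<w_i(s^*)$ and $z=w_i$ elsewhere, observes that $z$ remains feasible for the reachability LP of the chain $G_{\gamma_{i+1},\stra_2}$ (lowering one coordinate only relaxes the other constraints, and $z\leq w_i$ keeps the constraint at $s^*$ satisfied), and concludes that the reachability value is at most $z$, hence $v_{i+1}(s^*)>v_i(s^*)$. Equivalently, since $\stra_2$ is a best response in the MDP $G_{\gamma_{i+1}}$, one has $v_{i+1}(s^*)=\Pre_{\gamma_{i+1},\stra_2(s^*)}(v_{i+1})(s^*)\geq \Pre_{\gamma_{i+1},\stra_2(s^*)}(v_i)(s^*)>v_i(s^*)$, using the already-proved $v_{i+1}\geq v_i$. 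If you replace the ``escape upward'' step by this LP (or fixpoint) argument at $s^*$, your proof goes through and coincides with the paper's.
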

\begin{proof}
We first show that $v_{i+1} \geq v_i$.
Let $U=(\ov{A}_i \cap S)\setminus W_1$.
Let $w_i(s) = 1 - v_i(s)$ for all states $s \in S$. 
Since $v_i =\vas{\overline{\gamma}_i}(\Safe(F))$, it follows that
for all states $s\in S$ and all moves $a_2 \in \mov_2(s)$, we have
\[
  w_i(s) \geq \sum_{t \in S} w_i(t) \cdot \trans_{\gamma_i}(s,a_2).  
\]
The selector $\xi_1(s)$ chosen for $\gamma_{i+1}$ at $s \in U$ satisfies that
$\xi_1(s) \in \OptSel(v_i,s)$.
It follows that for all states $s\in S$ and all moves $a_2 \in \mov_2(s)$, 
we have
\[
  w_i(s) \geq \sum_{t \in S} w_i(t) \cdot \trans_{\gamma_{i+1}}(s,a_2).  
\]
It follows that the maximal probability with which player~2 can reach
$T$ against the strategy $\ov{\gamma}_{i+1}$ is at most $w_i$.
It follows that $v_i(s) \leq v_{i+1}(s)$.

We now argue that for some state $s \in U$ we have $v_{i+1}(s)>v_i(s)$.
Given the strategy $\ov{\gamma}_{i+1}$, consider a pure memoryless
counter-optimal strategy $\stra_2$ for player~2 to reach $T$.
Since the selectors $\gamma_{i+1}(s)$ at states $s\in U$ are obtained from the 
almost-sure strategy $\ov{\stra}$ in the turn-based game $\ov{G}_{v_i}$ 
to satisfy $\Safe(\ov{F})$, 
it follows from Lemma~\ref{lemm:stra-improve-safetb} 
that if for every state $s \in U$, the action 
$\stra_2(s) \in \OptSelCount(v_i,s,\gamma_{i+1})$, then from 
all states $s \in U$, the game stays safe in $F$ with probability~1.
Since $\ov{\gamma}_{i+1}$ is a given strategy for player~1, and 
$\stra_2$ is counter-optimal against $\ov{\gamma}_{i+1}$, this 
would imply that $U\subseteq \set{s \in S \mid \va(\Safe(F))=1}$.
This would contradict that $W_1=\set{s\in S \mid \va(\Safe(F))=1}$ 
and $U \cap W_1=\emptyset$.
It follows that for some state $s^* \in U$ we have $\stra_2(s^*)
\not\in \OptSelCount(v_i,s^*,\gamma_{i+1})$,
and since $\gamma_{i+1}(s^*) \in \OptSel(v_i,s^*)$ 
we have 
\[
v_i(s^*) < \sum_{t \in S} v_i(t) \cdot \trans_{\gamma_{i+1}}(s^*,\stra_2(s^*));
\]
in other words, we have
\[
w_i(s^*) > \sum_{t \in S} w_i(t) \cdot \trans_{\gamma_{i+1}}(s^*,\stra_2(s^*)).
\]
Define a valuation $z$ as follows:
$z(s)=w_i(s)$ for $s \neq s^*$, and 
$z(s^*)=\sum_{t \in S} w_i(t) \cdot \trans_{\gamma_{i+1}}(s^*,\stra_2(s^*))$.
Hence $z < w_i$, and given the strategy $\ov{\gamma}_{i+1}$ and the
counter-optimal strategy $\stra_2$, the valuation $z$ satisfies the
inequalities of the linear-program for reachability to $T$.
It follows that the probability to reach $T$ given $\ov{\gamma}_{i+1}$ 
is at most $z$.
Since $z < w_i$, it follows that $v_{i+1}(s) \geq v_i(s)$ for all $s\in S$,
and $v_{i+1}(s^*) > v_i(s^*)$.
This concludes the proof.
\qed
\end{proof}

We obtain the following theorem from Lemma~\ref{lemm:stra-improve-safe1}
and Lemma~\ref{lemm:stra-improve-safe2} that shows that the sequences of
values we obtain is monotonically non-decreasing.

\begin{theorem}[Monotonicity of values]\label{thrm:safe-mono}
For $i\geq 0$, let $\gamma_{i}$ and $\gamma_{i+1}$ be the player-1 selectors obtained
at iterations $i$ and $i+1$ of Algorithm~\ref{algorithm:strategy-improve-safe}.
If $\gamma_{i}\neq \gamma_{i+1}$, then 
$\vas{\overline{\gamma}_i}(\Safe(F)) < \vas{\overline{\gamma}_{i+1}}(\Safe(F))$.
\end{theorem}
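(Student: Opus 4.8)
The plan is to combine Lemma~\ref{lemm:stra-improve-safe1} and Lemma~\ref{lemm:stra-improve-safe2} according to the two-way case split that governs each iteration of the algorithm, and to argue that whenever $\gamma_i \neq \gamma_{i+1}$ the value strictly increases at some state while never decreasing at any state. Recall that by the definition of $\prec$, the conclusion $\vas{\overline{\gamma}_i}(\Safe(F)) < \vas{\overline{\gamma}_{i+1}}(\Safe(F))$ is precisely the statement that $v_i \leq v_{i+1}$ everywhere and $v_i(s) < v_{i+1}(s)$ for at least one state $s$, so it suffices to establish these two facts.

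First I would observe that the selector $\gamma_{i+1}$ differs from $\gamma_i$ only because one of the two branches of step~3.2/3.3 was taken, and that the branch taken is determined by whether the set $I = \set{s \in S \setminus (W_1 \cup T) \mid \Pre_1(v_i)(s) > v_i(s)}$ is empty. If $I \neq \emptyset$, then step~3.2 applies, and Lemma~\ref{lemm:stra-improve-safe1} gives $v_{i+1}(s) \geq v_i(s)$ for all $s \in S$ and $v_{i+1}(s) > v_i(s)$ for all $s \in I$; since $I$ is nonempty in this branch, we get the required strict increase at some state. If instead $I = \emptyset$, then step~3.3 was taken; moreover, since the algorithm did not terminate at this iteration (it produced a new selector $\gamma_{i+1}$), the guard of step~3.3.3 must have been satisfied, i.e., $(\ov{A}_i \cap S) \setminus W_1 \neq \emptyset$. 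Thus the hypotheses of Lemma~\ref{lemm:stra-improve-safe2} are met, and that lemma yields $v_{i+1}(s) \geq v_i(s)$ for all $s \in S$ together with $v_{i+1}(s) > v_i(s)$ for some $s \in (\ov{A}_i \cap S) \setminus W_1$, again giving the required strict increase.

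The only subtlety to nail down is that these two cases are exhaustive whenever $\gamma_i \neq \gamma_{i+1}$: if $I = \emptyset$ and also $(\ov{A}_i \cap S) \setminus W_1 = \emptyset$, then the algorithm reaches step~4 and returns $\overline{\gamma}_i$ without ever defining $\gamma_{i+1}$, so this situation cannot arise under the hypothesis that $\gamma_{i+1}$ exists and differs from $\gamma_i$. Hence in every iteration producing a genuinely new selector we are in one of the two branches above, and in each branch monotonicity (non-strict everywhere, strict somewhere) has already been proved in the two lemmas. Combining, $\vas{\overline{\gamma}_i}(\Safe(F)) \leq \vas{\overline{\gamma}_{i+1}}(\Safe(F))$ holds pointwise with strict inequality at some state, which by definition of $\prec$ is exactly $\vas{\overline{\gamma}_i}(\Safe(F)) < \vas{\overline{\gamma}_{i+1}}(\Safe(F))$. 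I do not anticipate a genuine obstacle here; the theorem is essentially a bookkeeping corollary of the two lemmas, and the main point requiring care is the correct identification of which lemma applies in which branch and the observation that a new selector is produced only in those two branches.
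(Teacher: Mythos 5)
Your proof is correct and matches the paper's intent exactly: the paper derives Theorem~\ref{thrm:safe-mono} directly by combining Lemma~\ref{lemm:stra-improve-safe1} (for the case $I\neq\emptyset$, step~3.2) and Lemma~\ref{lemm:stra-improve-safe2} (for the case $I=\emptyset$ and $(\ov{A}_i\cap S)\setminus W_1\neq\emptyset$, step~3.3), which is precisely your case split. Your added observation that a new selector is produced only in these two branches is the right bookkeeping point and is consistent with the algorithm's termination condition.
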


\begin{theorem}[Optimality on termination]\label{thrm:safe-termination}
Let $v_i$ be the valuation at iteration $i$ of 
Algorithm~\ref{algorithm:strategy-improve-safe} such that 
$v_i=\vas{\ov{\gamma}_i}(\Safe(F))$. 
If  
$I=\set{s\in S \setminus (W_1 \cup T) \mid \Pre_1(v_i)(s) > v_i(s)}=\emptyset$,
and $(\ov{A}_i \cap S)\setminus W_1=\emptyset$,
then $\ov{\gamma}_i$ is an optimal strategy and 
$v_i=\va(\Safe(F))$.
\end{theorem}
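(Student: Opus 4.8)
The plan is to prove the two inequalities $v_i\le\va(\Safe(F))$ and $v_i\ge\va(\Safe(F))$ separately. The first is immediate: $v_i=\winval{1}^{\ov\gamma_i}(\Safe(F))$ is the value guaranteed by the \emph{particular} player-1 strategy $\ov\gamma_i$, so $v_i\le\va(\Safe(F))$; and once the reverse inequality is in hand, $\winval{1}^{\ov\gamma_i}(\Safe(F))=\va(\Safe(F))$ is by definition optimality of $\ov\gamma_i$. So everything reduces to showing $\va(\Safe(F))(s)\le v_i(s)$ for all $s$. Writing $T=S\setminus F$ and $w_i=1-v_i$, the determinacy identity $\va(\Safe(F))+\vb(\Reach(T))=1$ turns this into: for every $s$, player~2 has a strategy forcing $\Reach(T)$ from $s$ with probability at least $w_i(s)$, against all player-1 strategies; by Theorem~\ref{thrm:memory-determinacy}(1) it suffices to beat all \emph{memoryless} player-1 strategies.

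I would build player~2's witness from the two consequences of the termination test. Since $I=\emptyset$, $\Pre_1(v_i)=v_i$ on $S\setminus(W_1\cup T)$ (and trivially on the absorbing sets $W_1$, where $v_i\equiv 1$, and $T$, where $v_i\equiv 0$); hence at each state $s$ the matrix game with payoff $(\xi_1,\xi_2)\mapsto\Pre_{\xi_1,\xi_2}(v_i)(s)$ has value $v_i(s)$, and fixing a player-2 optimal mixed selector $\eta_2$ (at each $s$) gives $\Pre_{\xi_1,\eta_2}(v_i)(s)\le v_i(s)$ for every $\xi_1$, so that under $\ov{\eta_2}$ the sequence $v_i(X_k)$ is a bounded supermartingale against any player-1 strategy; by complementary slackness, $\supp(\eta_2(s))\subseteq\CountOpt(v_i,s,\xi_1)$ for every optimal $\xi_1\in\OptSel(v_i,s)$, i.e.\ player~2's moves lie among the counter-optimal moves that label the turn-based reduction. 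Since $\ov A_i\cap S=W_1$ (the hypothesis plus $W_1\subseteq\ov A_i$), no state of $S\setminus W_1$ is almost-sure winning for player~1 for $\Safe(\ov F)$ in $\ov G_{v_i}$; fixing (Theorem~\ref{thrm:memory-determinacy}(3)) a pure memoryless player-2 strategy $\ov\stra_2$ in $\ov G_{v_i}$ optimal for the complementary objective $\Reach(\ov S\setminus\ov F)$, player~2 forces leaving $\ov F$ within $|\ov S|$ steps, with uniform positive probability, from every state of $\ov S\setminus\ov A_i$. Player~2 then plays, in $G$, the selector $\eta_2$ mixed, with a fixed small probability, with the escape moves that $\ov\stra_2$ prescribes at the states $(s,A,B)$ of $\ov G_{v_i}$; because such an escape move lies in $\CountOpt(v_i,s,\xi_1)$ whenever player~1 plays an optimal $\xi_1$ with $\supp(\xi_1)=A$, it does not spoil the supermartingale.

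For the analysis, fix a memoryless $\stra_1$. The bounded supermartingale $v_i(X_k)$ converges almost surely to some $X_\infty$, and since $v_i$ has finite range, on $\Safe(F)$ the play is eventually confined to a single value-class $U_c(v_i)$; here $c>0$ forces $U_c(v_i)\subseteq F\setminus W_1$ and $c=0$ forces the play into $U_0(v_i)\setminus T$. In either case, confinement to a value class forces player~1 to play only value-optimal selectors there (a strictly suboptimal selector makes $v_i$ strictly decrease in expectation), so the confined play mirrors, in $\ov G_{v_i}$, a play kept inside $\ov F$ along which player~2's escape moves are taken infinitely often; applying the end-component characterisation (Theorem~\ref{theo-ec}) to the induced chain of $\ov G_{v_i}$ together with the optimality of $\ov\stra_2$ bounds the probability of such confinement by the turn-based safety value at $s$, and running this bound class by class from $W_1$ downward yields $\Prb_s^{\stra_1,\stra_2}(\Reach(T))\ge w_i(s)$. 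The hard part is precisely this last accounting: one must show that $\ov A_i\cap S=W_1$ certifies not merely that player~1 cannot be trapped with positive value at some isolated state, but that the escape probabilities player~2 gains in $\ov G_{v_i}$ add up, across all value classes and against every player-1 response, to the full deficit $w_i(s)$ — phrased fixpoint-theoretically, that $v_i$ is the \emph{largest} $\Pre_1$-fixpoint with the correct boundary behaviour, so that the trivial upper bound $v_i\le\va(\Safe(F))$ is in fact tight. Making the simultaneous-move game $G$ and the turn-based reduction $\ov G_{v_i}$ speak to each other — via the facts that $\supp(\eta_2(s))$ meets every $\CountOpt(v_i,s,\xi_1)$ and that the escape moves stay counter-optimal — is what lets the improvements of step~3.2 and step~3.3 combine, and is the technical core of the proof; it is also the natural counterpart of Lemma~\ref{lemm:stra-improve-safetb}, run in the reverse direction.
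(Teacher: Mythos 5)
Your reduction is right (memoryless determinacy for safety, then exhibit for each memoryless $\stra_1$ a player-2 response witnessing $\Prb_s^{\stra_1,\stra_2}(\Reach(T))\ge 1-v_i(s)$), and your two ingredients — counter-optimal moves from the one-shot matrix games at states where $\Pre_1(v_i)=v_i$, plus the player-2 optimal strategy $\ov\stra_2$ of the turn-based game $\ov G_{v_i}$ — are exactly the paper's. But the proof is not closed: you yourself flag ``the last accounting,'' namely that the escape probabilities gained in $\ov G_{v_i}$ add up across value classes to the full deficit $w_i(s)$, as the hard part, and you never carry it out. That accounting is the theorem; without it you have only the trivial direction $v_i\le\va(\Safe(F))$. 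There is also a concrete defect in your construction: you build a single mixed selector $\eta_2$ blended with escape moves taken from states $(s,A,B)$ of $\ov G_{v_i}$, but at a given $s$ there may be several pairs $(A,B)\in\OptSelCount(v_i,s)$, and an escape move $b$ chosen for one pair need not lie in $\CountOpt(v_i,s,\xi_1')$ for an optimal $\xi_1'$ with a different support $A'$; for such $\xi_1'$ one has $\Pre_{\xi_1',b}(v_i)(s)>v_i(s)$, and mixing $b$ in \emph{breaks} the supermartingale property you rely on. The fix is to let $\stra_2$ depend on the fixed $\stra_1$ (which is legitimate, since you only need $\inf_{\stra_2}$), choosing the escape move for the pair $(A,B)$ actually realized by $\stra_1(s)$.

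The paper avoids the accounting entirely, and this is the idea you are missing. With the pure response $\stra_2$ defined per $\stra_1$ (punish strictly wherever $\stra_1(s)\notin\OptSel(v_i,s)$, otherwise follow $\ov\stra_2$), one shows that the Markov chain $G_{\stra_1,\stra_2}$ has \emph{no closed recurrent class inside $S\setminus(W_1\cup T)$}: a recurrent class on which $\stra_1$ is everywhere value-optimal would lift to an almost-sure safe set in $\ov G_{v_i}$, contradicting $(\ov A_i\cap S)\setminus W_1=\emptyset$; and a recurrent class containing a non-optimal state is killed by your own least-value-class argument (the class must sit inside the minimal value class it meets, yet the non-optimal state leaks to a strictly lower one). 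Hence the play reaches $W_1\cup T$ almost surely, so $\Prb(\Safe(F))=\Prb(\Reach(W_1))$, and the pointwise inequalities $\Pre_{\stra_1(s),\stra_2(s)}(v_i)(s)\le v_i(s)$ make $v_i$ a feasible solution of the reachability linear program for $W_1$, giving $\Prb(\Reach(W_1))\le v_i$ directly — no class-by-class summation of escape probabilities is needed. I would rework your final paragraph along these lines rather than trying to make the supermartingale-limit accounting rigorous.
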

\begin{proof}
We show that for all memoryless strategies $\stra_1$ for player~1 we have 
$\vas{\stra_1}(\Safe(F)) \leq v_i$.
Since memoryless optimal strategies exist for concurrent games with safety objectives
(Theorem~\ref{thrm:memory-determinacy}) the desired result follows.

Let $\ov{\stra}_2$ be a pure memoryless optimal strategy for 
player~2 in $\ov{G}_{v_i}$ for the objective 
complementary to $\Safe(\ov{F})$, 
where $(\ov{G}_{v_i},\Safe(\ov{F}))=\TB(G,v_i,F)$.
Consider a memoryless strategy $\stra_1$ for player~1,
and we define a pure memoryless strategy $\stra_2$
for player~2 as follows.
\begin{enumerate}
\item If $\stra_1(s) \not\in \OptSel(v_i,s)$, then $\stra_2(s)=b \in \mov_2(s)$,
	such that $\Pre_{\stra_1(s),b}(v_i)(s) < v_i(s)$;
	(such a $b$ exists since $\stra_1(s) \not\in \OptSel(v_i,s)$).

\item If $\stra_1(s) \in \OptSel(v_i,s)$, then let $A=\supp(\stra_1(s))$,
	and consider $B$ such that $B=\OptSelCount(v_i,s,\stra_1(s))$.
	Then we have $\stra_2(s)=b$, such that $\ov{\stra}_2((s,A,B))=(s,A,b)$. 
\end{enumerate}
Observe that by construction of $\stra_2$, for all 
$s \in S \setminus (W_1 \cup T)$, we have 
$\Pre_{\stra_1(s),\stra_2(s)}(v_i)(s) \leq v_i(s)$.
We first show that in the Markov chain obtained by fixing $\stra_1$ and 
$\stra_2$ in $G$, there is no closed connected recurrent set of states $C$
such that $C \subseteq S \setminus (W_1 \cup T)$.
Assume towards contradiction that $C$ is a closed connected recurrent 
set of states in $S \setminus (W_1 \cup T)$.
The following case analysis achieves the contradiction.
\begin{enumerate}
\item Suppose for every state $s \in C$ we have $\stra_1(s) \in \OptSel(v_i,s)$.
Then consider the strategy $\ov{\stra}_1$ in $\ov{G}_{v_i}$ such that 
for a state $s \in C$ we have $\ov{\stra}_1(s)=(s,A,B)$,
where $\stra_1(s)=A$, and $B=\OptSelCount(v_i,s,\stra_1(s))$.
Since $C$ is closed connected recurrent states, it follows by construction 
that for all states $s \in C$ in the game $\ov{G}_{v_i}$ we have 
$\Prb_s^{\ov{\stra}_1,\ov{\stra}_2}(\Safe(\ov{C}))=1$,
where $\ov{C}=C \cup \set{(s,A,B) \mid s \in C} \cup \set{(s,A,b) \mid s \in C}$.
It follows that for all $s \in C$ in $\ov{G}_{v_i}$ 
we have $\Prb_s^{\ov{\stra}_1,\ov{\stra}_2}(\Safe(\ov{F}))=1$.
Since $\ov{\stra}_2$ is an optimal strategy, it follows that $C 
\subseteq (\ov{A}_i \cap S)\setminus W_1$.
This contradicts that $(\ov{A}_i \cap S) \setminus W_1=\emptyset$.

\item Otherwise for some state $s^* \in C$ we have $\stra_1(s^*) \not\in
\OptSel(v_i,s^*)$.
Let $r=\min\set{q \mid U_q(v_i) \cap C \neq \emptyset}$, i.e., 
$r$ is the least value-class with non-empty intersection with $C$.
Hence it follows that for all $q<r$, we have 
$U_q(v_i) \cap C=\emptyset$.
Observe that since for all $s \in C$ we have 
$\Pre_{\stra_1(s),\stra_2(s)}(v_i)(s) \leq v_i(s)$,
it follows that for all $s \in U_r(v_i)$ either
(a)~$\dest(s,\stra_1(s),\stra_2(s))\subseteq U_r(v_i)$;
or (b)~$\dest(s,\stra_1(s),\stra_2(s)) \cap U_q(v_i) \neq \emptyset$,
for some $q<r$.
Since $U_r(v_i)$ is the least value-class with non-empty intersection 
with $C$, it follows that for all $s \in U_r(v_i)$ we have 
$\dest(s,\stra_1(s),\stra_2(s)) \subseteq U_r(v_i)$.
It follows that $C \subseteq U_r(v_i)$. 
Consider the state $s^* \in C$ such that $\stra_1(s^*) \not\in \OptSel(v_i,s)$.
By the construction of $\stra_2(s)$, we have 
$\Pre_{\stra_1(s^*),\stra_2(s^*)}(v_i)(s^*) < v_i(s^*)$.
Hence we must have $\dest(s^*,\stra_1(s^*),\stra_2(s^*)) \cap U_{q}(v_i) 
\neq \emptyset$, for some $q <r$.
Thus we have a contradiction.
\end{enumerate}
It follows from above that there is no closed connected recurrent set of states
in $S\setminus (W_1 \cup T)$, and hence with probability~1 
the game reaches $W_1 \cup T$ from all states in $S \setminus (W_1 \cup T)$.
Hence the probability to satisfy $\Safe(F)$ is equal to the probability 
to reach $W_1$.
Since for all states $s \in S\setminus (W_1 \cup T)$ we have 
$\Pre_{\stra_1(s),\stra_2(s)}(v_i)(s) \leq v_i(s)$, 
it follows that given the strategies $\stra_1$ and 
$\stra_2$, the valuation $v_i$ satisfies all the inequalities 
for linear program to reach $W_1$.
It follows that the probability to reach $W_1$ from $s$ is 
atmost $v_i(s)$.
It follows that for all $s \in S\setminus (W_1 \cup T)$ 
we have $\vas{\stra_1}(\Safe(F))(s)\leq v_i(s)$.
The result follows.
\qed
\end{proof}

\medskip\noindent{\bf Convergence.}
We first observe that since pure memoryless optimal strategies exist for
turn-based stochastic games with safety objectives 
(Theorem~\ref{thrm:memory-determinacy}), for turn-based stochastic games 
it suffices to iterate over pure memoryless selectors.
Since the number of pure memoryless strategies is
finite, it follows for turn-based stochastic games 
Algorithm~\ref{algorithm:strategy-improve-safe} always 
terminates and yields an optimal strategy.
For concurrent games, we will use the result that for 
$\vare>0$, there is a \emph{$k$-uniform memoryless} strategy
that achieves the value of a safety objective with in $\vare$.
We first define $k$-uniform memoryless strategies.
A selector $\xi$ for player~1 is \emph{$k$-uniform} if for 
all $s \in S \setminus (T \cup W_1)$ and all $a \in \supp(\stra_1(s))$ 
there exists $i,j \in \Nats$ such that $0 \leq i \leq j \leq k$ and
$\xi(s)(a)=\frac{i}{j}$, i.e., the moves in the support are  played 
with probability that are multiples of $\frac{1}{\ell}$ with $\ell \leq k$.

\begin{lemma}\label{lemm:kuniform}
For all concurrent game graphs $G$, for all safety objectives
$\Safe(F)$, for $F \subseteq S$,
for all $\vare>0$, there exist $k$-uniform selectors $\xi$ such that
$\overline{\xi}$ is an $\vare$-optimal strategy for
$k=2^{\frac{2^{O(n)}}{\vare}}$, where $n=|S|$.
\end{lemma}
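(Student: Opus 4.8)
\emph{Plan.} I would derive the bound from three ingredients: the existence of memoryless optimal strategies (Theorem~\ref{thrm:memory-determinacy}(1)); a quantitative bound, obtained through the first-order theory of the reals, on the description complexity of \emph{some} optimal memoryless selector; and a rounding argument whose error is controlled by a modulus-of-continuity estimate for the safety value as a function of the player-$1$ selector. Write $v^*=\va(\Safe(F))$. Since all states of $W_1\union T$ are absorbing, and on $W_1$ one may use the almost-sure winning selector of \cite{dAH00} (which is uniform over a subset of moves, hence already $|S|$-uniform) and on $T$ the $k$-uniformity condition is vacuous, it suffices to produce, on $S\setminus(T\union W_1)$, a $k$-uniform selector $\xi$ with $\vas{\overline\xi}(\Safe(F))(s)\ge v^*(s)-\vare$ everywhere.

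First I would exhibit an optimal selector of bounded complexity. By Theorem~\ref{thrm:memory-determinacy}(1) the set $\Pi^*=\set{\xi\in\prod_s\distr(\mov_1(s))\mid \vas{\overline\xi}(\Safe(F))=v^*}$ is nonempty, and I would argue it is a semialgebraic subset of $\reals^m$ with $m=\sum_s|\mov_1(s)|=O(n\cdot|\moves|)$, defined by polynomials whose degree and integer-coefficient bit-size are polynomial in $|G|$: once $\xi$ is fixed, $\vas{\overline\xi}(\Safe(F))(s)=1-x_\xi(s)$ where $x_\xi$ is the pointwise-least solution of the reachability linear program of Section~\ref{sec:mdp} for the $2$-MDP $G_\xi$, whose coefficients $\trans_\xi(s,a_2)(t)=\sum_{a_1}\trans(s,a_1,a_2)(t)\,\xi(s)(a_1)$ are linear in $\xi$, and both ``$x_\xi$ is the least solution'' and ``$v^*(s)=\sup_\xi\vas{\overline\xi}(\Safe(F))(s)$'' are expressible by quantifying over finitely many vectors in $[0,1]^n$. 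Invoking a quantitative quantifier-elimination / witness bound (of the kind underlying the analyses of \cite{EY06,dAM04}), a nonempty semialgebraic set of this complexity contains a point whose coordinates are real algebraic numbers of height at most $H=2^{2^{O(n)}}$; call the corresponding optimal selector $\xi^*$. Two standard facts then follow: every positive probability of $\xi^*$ is at least $1/H$, and every coordinate of $\xi^*$ lies within $1/k$ of a multiple of $1/k$.

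Second, I would round: replace each $\xi^*(s)(a)$ by $\lceil k\,\xi^*(s)(a)\rceil/k$ and then decrease the largest probability in $\xi(s)$ so that $\xi(s)$ becomes a distribution; for $k>2|\moves|\,H$ this yields a $k$-uniform selector $\xi$ with $\supp(\xi(s))=\supp(\xi^*(s))$ and $\|\xi(s)-\xi^*(s)\|_\infty\le|\moves|/k$ on every state of $S\setminus(T\union W_1)$. Third — the step I expect to be the genuine obstacle — I would bound $|\vas{\overline\xi}(\Safe(F))(s)-v^*(s)|$. Fix a pure memoryless player-$2$ strategy $\sigma_2$ optimal against $\xi$ in $G_\xi$; then $\vas{\overline\xi}(\Safe(F))(s)=\Prb_s^{\overline\xi,\sigma_2}(\Safe(F))$, while $\Prb_s^{\overline{\xi^*},\sigma_2}(\Safe(F))\ge\vas{\overline{\xi^*}}(\Safe(F))(s)=v^*(s)$, so it suffices to compare the safety probabilities in the Markov chains $G_{\xi,\sigma_2}$ and $G_{\xi^*,\sigma_2}$, which share the same support, hence the same underlying graph, and whose transition probabilities differ by at most $|\moves|^2/k$. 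Writing $1-\Prb_s(\Safe(F))$ as the reach-probability of $S\setminus F$, on the transient part it equals $((I-Q)^{-1}R\mathbf{1})(s)$ for the relevant sub-blocks $Q,R$ of the chain's matrix, and $\|(I-Q)^{-1}\|$ is bounded by $n$ times a power of the reciprocal of the smallest positive entry of the chain; that entry is at least $2^{-|\trans|}\cdot(1/H)\ge 2^{-2^{O(n)}}$, so the safety value is $2^{2^{O(n)}}$-Lipschitz in the transition probabilities. Choosing $k$ so that $(|\moves|^2/k)\cdot 2^{2^{O(n)}}<\vare$ and $k>2|\moves|H$ — both implied by $k=2^{2^{O(n)}/\vare}$ for $\vare\le 1$ — gives $\vas{\overline\xi}(\Safe(F))(s)\ge v^*(s)-\vare$, as required. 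The delicate point is that a naive continuity bound for the safety value degenerates precisely when some transition probability is close to $0$; the remedy is exactly that the first step pins down an optimal $\xi^*$ whose positive probabilities are not too small and that rounding preserves supports, keeping the comparison chain $G_{\xi^*,\sigma_2}$, and hence the Lipschitz constant, under explicit control.
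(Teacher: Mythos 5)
Your proposal is correct, and at its core it uses the same key external ingredient as the paper: a quantitative witness bound from real algebraic geometry (the paper cites Theorem~13.12 of Basu--Pollack--Roy) applied to a first-order formula over the reals asserting that a memoryless selector achieves a given safety value. The paper's own proof is a four-line sketch that delegates everything beyond that citation; what you supply explicitly --- and what is arguably the real content of the lemma --- is the route from the algebraic witness to a $k$-uniform strategy: first pin down an exact optimal selector $\xi^*$ whose positive probabilities are bounded below by $2^{-2^{O(n)}}$, then round it to a $k$-uniform selector with identical supports, and control the change in value via a perturbation bound on absorption probabilities, $\|(I-Q)^{-1}\|_\infty \le n/p_{\min}^{\,n}$, in the Markov chain obtained after fixing a pure memoryless counter-optimal player-2 strategy. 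This is sound: fixing $\sigma_2$ optimal against the rounded $\xi$ gives exactly the one-sided comparison needed; preserving supports keeps the reachable/transient decomposition (and hence the invertibility of $I-Q$ and the Lipschitz constant) identical for the two chains; and only one resolvent norm is needed since $x-x'=(I-Q)^{-1}\bigl[(Q-Q')x'+(b-b')\bigr]$. Two points of looseness, neither fatal and both shared with the paper: the set of optimal selectors is semialgebraic only after eliminating the quantifier blocks hidden in ``least solution of the LP'' and ``supremum over selectors,'' so your claim of polynomially bounded degrees and coefficients applies to the quantified formula, and the doubly exponential height of $\xi^*$ is precisely the price of elimination (you do acknowledge this by invoking the witness bound rather than a naive description of the set); and the factor $2^{-|\trans|}$ in your lower bound on $p_{\min}$ means $k$ genuinely depends on the bit-size of $\trans$ as well as on $n=|S|$, a dependence the lemma's statement also sweeps into the $O(\cdot)$.
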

\begin{proof} {\em (Sketch).} 
For a rational $r$, using the results of~\cite{dAM04}, it can be 
shown that whether $\va(\Safe(F))(s)\geq r$ can be expressed
in the quantifier free fragment of the theory of reals.
Then using the formula in the theory of reals and 
Theorem~13.12 of~\cite{BasuBook}, it can be shown that if there 
is a memoryless strategy $\stra_1$ that achieves value at least 
$r$, 
then there is a $k$-uniform memoryless strategy $\stra_1^k$ 
that achieves value at least $r-\vare$,
where $k=2^{\frac{2^{O(n)}}{\vare}}$, for $n=|S|$.
\qed
\end{proof}

\noindent{\bf Strategy improvement with $k$-uniform selectors.}
We first argue that if we restrict 
Algorithm~\ref{algorithm:strategy-improve-safe} 
such that every iteration yields a $k$-uniform selector, then 
the algorithm terminates.
If we  restrict to $k$-uniform selectors, then a concurrent game graph
$G$ can be converted to a turn-based stochastic game graph,
where player~1 first chooses a $k$-uniform selector, then player~2
chooses an action, and then the transition is determined by the
chosen $k$-uniform selector of player~1, the action of player~2
and the transition function $\trans$ of the game graph $G$.
Then by termination of turn-based stochastic games it follows
that the algorithm will terminate.
Given $k$, let us denote by $z_i^k$ the valuation of 
Algorithm~\ref{algorithm:strategy-improve-safe} at iteration $i$,
where the selectors are restricted to be $k$-uniform,
and $v_i$ is the valuation of Algorithm~\ref{algorithm:strategy-improve-safe}
at iteration $i$.
Since $v_i$ is obtained without any restriction, it follows that
for all $k>0$, for all $i \geq 0$, we have 
$z_i^k \leq v_i$.
From Lemma~\ref{lemm:kuniform} it follows that for all $\vare>0$,
there exists a $k>0$ and $i\geq 0$ such that for all $s$ we have 
$z_i^k(s) \geq \va(\Safe(F))(s) -\vare$.
This gives us the following result.

\begin{theorem}[Convergence]
Let $v_i$ be the valuation obtained at iteration $i$ 
of Algorithm~\ref{algorithm:strategy-improve-safe}.
Then the following assertions hold.
\begin{enumerate}
\item For all $\vare>0$, there exists $i$ such 
that for all $s$ we have 
$v_i(s) \geq \va(\Safe(F))(s) -\vare$.

\item $\lim_{i \to \infty} v_i =\va(\Safe(F))$.

\end{enumerate}
\end{theorem}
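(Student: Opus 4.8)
The plan is to derive the Convergence theorem almost entirely from the ingredients already assembled, so the proof is short. First I would recall the monotonicity established in Theorem~\ref{thrm:safe-mono} together with Lemma~\ref{lemm:stra-improve-safe1} and Lemma~\ref{lemm:stra-improve-safe2}: the sequence $v_0 \leq v_1 \leq v_2 \leq \cdots$ is monotonically non-decreasing, and each $v_i$ is bounded above by $\va(\Safe(F))$ (this upper bound holds because $v_i = \vas{\ov{\gamma}_i}(\Safe(F))$ is, by definition of the value, at most $\va(\Safe(F))$ pointwise). Hence the limit $\lim_{i\to\infty} v_i$ exists pointwise and is $\leq \va(\Safe(F))$. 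So the entire content of the theorem is the reverse inequality: the limit is not strictly below the value at any state, i.e.\ part~(1), which immediately yields part~(2).

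To prove part~(1), I would run the $k$-uniform comparison argument that the text sets up just before the statement. Fix $\vare > 0$. By Lemma~\ref{lemm:kuniform} there is a $k = 2^{2^{O(n)}/\vare}$ such that some $k$-uniform memoryless selector $\xi$ gives an $\vare$-optimal strategy $\ov{\xi}$, so $\vas{\ov{\xi}}(\Safe(F))(s) \geq \va(\Safe(F))(s) - \vare$ for all $s$. Now consider the restricted version of Algorithm~\ref{algorithm:strategy-improve-safe} in which every selector produced is forced to be $k$-uniform; as the text explains, this restricted algorithm is equivalent to strategy improvement on a finite turn-based stochastic game (player~1 first picks a $k$-uniform selector, then player~2 picks an action, then $\trans$ resolves the transition), and since pure memoryless optimal strategies exist for turn-based stochastic safety games and there are only finitely many of them, the restricted algorithm terminates — say at iteration $i_0$ with valuation $z_{i_0}^k$. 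By the Optimality-on-termination argument applied to this turn-based game, $z_{i_0}^k$ is the optimal value of that turn-based game, and since player~1 may in particular use the constant $k$-uniform selector $\xi$ there, we get $z_{i_0}^k(s) \geq \vas{\ov{\xi}}(\Safe(F))(s) \geq \va(\Safe(F))(s) - \vare$ for all $s$.

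Finally I would invoke the domination $z_i^k \leq v_i$, valid for every $k$ and every $i$ because the unrestricted algorithm optimizes over a superset of selectors at each step (this is the one point that needs a sentence of care: one must check that allowing arbitrary selectors at every iteration can only increase the valuation, which follows by induction on $i$ from the monotonicity lemmas — if $z_i^k \le v_i$ then the improvement step from $v_i$ dominates the improvement step from $z_i^k$, using monotonicity of $\Pre_1$ and of the MDP value). Combining, $v_{i_0}(s) \geq z_{i_0}^k(s) \geq \va(\Safe(F))(s) - \vare$ for all $s$, which is exactly part~(1). Then part~(2) follows: the limit is $\leq \va(\Safe(F))$ by the bound noted above and $\geq \va(\Safe(F)) - \vare$ for every $\vare > 0$, hence equal to $\va(\Safe(F))$.

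I expect the main obstacle to be pinning down the inequality $z_i^k \leq v_i$ rigorously — specifically, arguing that the unrestricted strategy-improvement iteration stays pointwise above the $k$-uniform-restricted iteration at every step. The subtlety is that the two runs make different choices (they are not coupled state-by-state), so the comparison cannot be done selector-by-selector; instead it has to go through the valuation recursion, using that $v_{i+1} \geq \Pre_1(v_i)$ (Lemma~\ref{lemm:stra-improve-safe1}), that the $k$-uniform run satisfies $z_{i+1}^k \leq \Pre_1^{k\text{-unif}}(z_i^k) \leq \Pre_1(z_i^k)$, and the monotonicity of $\Pre_1$ in its valuation argument, plus the fact that once $z_i^k \le v_i$ the step~3.3 (turn-based) improvements also preserve the inequality since the almost-sure winning set in the $k$-uniform reduction is contained in the one for the full reduction. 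Everything else is bookkeeping on top of results already proved in the excerpt.
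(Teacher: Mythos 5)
Your overall route is the paper's route: establish $v_0\leq v_1\leq\cdots\leq \va(\Safe(F))$ from the monotonicity results, then obtain the matching lower bound by comparing the unrestricted iteration with the iteration restricted to $k$-uniform selectors, which terminates because it is strategy improvement on a finite turn-based game, and whose terminal valuation is within $\vare$ of $\va(\Safe(F))$ by Lemma~\ref{lemm:kuniform}. Your elaboration of why the restricted run terminates at a value at least $\va(\Safe(F))-\vare$ (optimality on termination for the turn-based reduction, plus availability of the constant $\vare$-optimal $k$-uniform selector) is a correct filling-in of a step the paper leaves implicit.

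The one place where your write-up goes wrong is precisely the step you flag as delicate, the domination $z_i^k\leq v_i$. Your proposed chain rests on the inequality $z_{i+1}^k\leq \Pre_1^{k\text{-unif}}(z_i^k)$, but the improvement lemmas give the \emph{opposite} direction: the valuation of the newly selected strategy satisfies $z_{i+1}^k\geq \Pre_1^{k\text{-unif}}(z_i^k)$, and since $z_{i+1}^k$ is the exact MDP value of the new selector it can strictly exceed the one-step lookahead (that is the whole point of a strategy-improvement step). So the chain $z_{i+1}^k\leq\Pre_1^k(z_i^k)\leq\Pre_1(z_i^k)\leq\Pre_1(v_i)\leq v_{i+1}$ breaks at its first link, and the inductive argument as sketched does not close; the same difficulty affects your treatment of the step-3.3 improvements. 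To be fair, the paper itself asserts $z_i^k\leq v_i$ with only the remark that $v_i$ is obtained without any restriction, so you are not worse off than the source on this point --- but your instinct that this is the load-bearing claim is right, and the specific inequalities you propose to prove it with are reversed. A repair has to compare the two runs through some quantity that genuinely upper-bounds $z_{i+1}^k$ (or argue only about the limits rather than iteration-by-iteration), not through the lower bounds that the improvement lemmas provide.
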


\noindent{\bf Complexity.} 
Algorithm~\ref{algorithm:strategy-improve-safe} may not terminate
in general. We briefly describe the complexity of every iteration.
Given a valuation $v_i$, the computation of $\Pre_1(v_i)$ 
involves solution of matrix games with rewards $v_i$ and 
can be computed in polynomial time using linear-programming.
Given $v_i$ and $\Pre_1(v_i)=v_i$, 
the set $\OptSel(v_i,s)$ and $\OptSelCount(v_i,s)$ 
can be computed by enumerating the subsets of available actions
at $s$ and then using linear-programming: for example to check 
$(A,B)\in \OptSelCount(v_i,s)$ it suffices to check that 
there is an selector $\xi_1$ such that $\xi_1$ is optimal
(i.e. for all actions $b \in \mov_2(s)$ we have 
$\Pre_{\xi_1,b}(v_i)(s)\geq v_i(s)$);
for all $a \in A$ we have $\xi_1(a)>0$, and for all
$a \not \in A$ we have $\xi_1(a)=0$; and to check $B$ 
is the set of counter-optimal actions we check 
that for $b \in B$ we have 
$\Pre_{\xi_1,b}(v_i)(s)= v_i(s)$;
and for $b \not\in B$ we have 
$\Pre_{\xi_1,b}(v_i)(s)> v_i(s)$.
All the above can be solved by checking feasibility of a 
set of linear inequalities.
Hence $\TB(G,v_i,F)$ can be computed in time 
polynomial in size of $G$ and $v_i$ and exponential in the 
number of moves.
The set of almost-sure winning states in turn-based stochastic
games with safety objectives can be computed in 
linear-time~\cite{crg-tcs07}.

\section{Termination for Approximation and Turn-based Games}
In this section we present termination criteria for strategy improvement
algorithms for concurrent games for $\vare$-approximation, and 
then present an improved termination condition for turn-based games.

\medskip\noindent{\bf Termination for concurrent games.}
A strategy improvement algorithm for reachability games 
was presented in~\cite{CdAH06}.
We refer to the algorithm of~\cite{CdAH06} as the 
\emph{reachability strategy improvement algorithm}.
The reachability strategy improvement algorithm is simpler
than Algorithm~\ref{algorithm:strategy-improve-safe}: it is similar to 
Algorithm~\ref{algorithm:strategy-improve-safe} and in every iteration
only Step~3.2 is executed (and Step 3.3 need not be executed).
Applying the reachability strategy improvement algorithm of~\cite{CdAH06} 
for player~2, for a reachability objective $\Reach(T)$, we obtain a
sequence of valuations $(u_i)_{i\geq 0}$ such that 
(a) $u_{i+1} \geq u_i$;
(b) if $u_{i+1}=u_i$, then $u_i=\vb(\Reach(T))$; and
(c) $\lim_{i \to \infty} u_i =\vb(\Reach(T))$.
Given a concurrent game $G$ with $F \subs S$ and $T=S\setminus F$,
we apply the reachability strategy improvement algorithm to obtain
the sequence of valuation $(u_i)_{i\geq 0}$ as above, and 
we apply Algorithm~\ref{algorithm:strategy-improve-safe} 
to obtain a sequence of valuation $(v_i)_{i \geq 0}$.
The termination criteria are as follows:
\begin{enumerate}
\item if for some $i$ we have $u_{i+1}=u_i$, then we have 
$u_i=\vb(\Reach(T))$, and $1-u_i=\va(\Safe(F))$, and we 
obtain the values of the game;
\item if for some $i$ we have $v_{i+1}=v_i$, then we have 
$1-v_i=\vb(\Reach(T))$, and $v_i=\va(\Safe(F))$, and we 
obtain the values of the game; and
\item for $\vare>0$, if for some $i\geq 0$, we have $u_i + v_i \geq 1-\vare$,
then for all $s \in S$ we have
$v_i(s)\geq \va(\Safe(F))(s) -\vare$ and
$u_i(s)\geq \vb(\Reach(T))(s) -\vare$ (i.e., the algorithm
can stop for $\vare$-approximation).
\end{enumerate}
Observe that since $(u_i)_{i\geq 0}$ and $(v_i)_{i \geq 0}$ are 
both monotonically non-decreasing and $\va(\Safe(F))+ \vb(\Reach(T))=1$, 
it follows that if $u_i + v_i \geq 1-\vare$, then forall 
$j\geq i$ we have $u_i \geq u_j -\vare$ and
$v_i \geq v_j -\vare$.
This establishes that $u_i \geq \va(\Safe(F)) -\vare$ and
$v_i \geq \vb(\Reach(T)) -\vare$;
and the correctness of the stopping criteria (3) for 
$\vare$-approximation follows.
We also note that instead of applying the reachability 
strategy improvement algorithm, a value-iteration algorithm
can be applied for reachability games to obtain a 
sequence of valuation with properties similar to $(u_i)_{i \geq 0}$
and the above termination criteria can be applied.

\begin{theorem}
Let $G$ be a concurrent game graph with a safety objective $\Safe(F)$.
Algorithm~\ref{algorithm:strategy-improve-safe} and the
reachability strategy improvement algorithm for player~2 for the 
reachability objective $\Reach(S\setminus F)$ yield  sequence of valuations 
$(v_i)_{i \geq 0}$ and $(u_i)_{i \geq 0}$, respectively, such that 
(a)~for all $i\geq 0$, we have $v_i \leq \va(\Safe(F)) \leq 1-u_i$; 
and
(b)~$\lim_{i \to \infty} v_i = \lim_{i \to \infty} 1-u_i = \va(\Safe(F))$. 
\end{theorem}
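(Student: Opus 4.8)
The plan is to assemble the theorem from the three properties of the safety sequence $(v_i)_{i \ge 0}$ established earlier together with the three dual properties of the reachability sequence $(u_i)_{i \ge 0}$ produced by the reachability strategy improvement algorithm of \cite{CdAH06} applied to player~2. Concretely, the monotonicity of $(v_i)$ is Theorem~\ref{thrm:safe-mono} combined with Lemmas~\ref{lemm:stra-improve-safe1} and~\ref{lemm:stra-improve-safe2} (each $v_i = \vas{\ov{\gamma}_i}(\Safe(F))$ is a lower bound on $\va(\Safe(F))$ simply because $\ov{\gamma}_i$ is \emph{some} player-1 strategy, and the value is the supremum over all such strategies). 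Dually, each $u_i = \winval{2}^{\ov{\delta}_i}(\Reach(S \setminus F))$ is a lower bound on $\vb(\Reach(S \setminus F))$, and by the quantitative determinacy of \cite{Martin98} quoted in Section~2, $\va(\Safe(F)) + \vb(\Reach(S \setminus F)) = 1$, so $u_i \le \vb(\Reach(S\setminus F))$ rewrites exactly as $1 - u_i \ge \va(\Safe(F))$. Chaining these gives part~(a): $v_i \le \va(\Safe(F)) \le 1 - u_i$ for all $i \ge 0$.

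For part~(b), I would argue the two limits separately and then glue them. That $\lim_{i\to\infty} v_i = \va(\Safe(F))$ is precisely the Convergence theorem already proved in the excerpt (part~2 of that theorem), which itself rests on Lemma~\ref{lemm:kuniform} about the existence of $k$-uniform $\vare$-optimal selectors and the termination of the $k$-restricted algorithm on the induced turn-based game. Symmetrically, $\lim_{i\to\infty} u_i = \vb(\Reach(S\setminus F))$ is the convergence guarantee for the reachability strategy improvement algorithm of \cite{CdAH06} (property~(c) recalled just above the theorem statement). Applying the continuous map $x \mapsto 1-x$ to the second limit gives $\lim_{i\to\infty}(1 - u_i) = 1 - \vb(\Reach(S\setminus F)) = \va(\Safe(F))$, again by Martin's determinacy. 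Hence both $(v_i)$ and $(1-u_i)$ converge to the common value $\va(\Safe(F))$, which is the assertion of part~(b).

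I do not expect a genuine obstacle here, since the theorem is essentially a packaging of results proved earlier in the paper plus the cited reachability algorithm; the only point requiring a modicum of care is bookkeeping the direction of the inequalities under the complementation $v \mapsto 1 - v$ and being explicit that the determinacy identity $\va(\Safe(F)) + \vb(\Reach(S\setminus F)) = 1$ is what licenses the translation between ``player-2 lower bound on reachability'' and ``upper bound on the safety value''. If anything deserves a sentence of justification it is that the monotone sequences $(v_i)$ and $(1-u_i)$ are each \emph{sandwiched}: $(v_i)$ is nondecreasing and bounded above by $\va(\Safe(F))$, while $(1-u_i)$ is nonincreasing and bounded below by $\va(\Safe(F))$, so their limits exist and, by the two cited convergence results, coincide with $\va(\Safe(F))$ — this simultaneously yields the converging lower and upper bounds advertised in the introduction.

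\begin{proof}
By Theorem~\ref{thrm:safe-mono} together with Lemmas~\ref{lemm:stra-improve-safe1} and~\ref{lemm:stra-improve-safe2}, the sequence $(v_i)_{i\ge 0}$ is monotonically non-decreasing, and since $v_i = \vas{\ov{\gamma}_i}(\Safe(F)) = \inf_{\stra_2} \Prb^{\ov{\gamma}_i,\stra_2}(\Safe(F))$ for the player-1 strategy $\ov{\gamma}_i$, the definition of the value as a supremum over player-1 strategies gives $v_i \le \va(\Safe(F))$ for all $i$. Applying the reachability strategy improvement algorithm of~\cite{CdAH06} for player~2 to the objective $\Reach(S\setminus F)$ yields a non-decreasing sequence $(u_i)_{i\ge 0}$ with $u_i \le \vb(\Reach(S\setminus F))$ for all $i$ and $\lim_{i\to\infty} u_i = \vb(\Reach(S\setminus F))$. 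By the determinacy result of~\cite{Martin98}, $\va(\Safe(F)) + \vb(\Reach(S\setminus F)) = 1$, hence $u_i \le \vb(\Reach(S\setminus F))$ is equivalent to $1 - u_i \ge \va(\Safe(F))$. Combining, for all $i\ge 0$ we have $v_i \le \va(\Safe(F)) \le 1-u_i$, which is~(a).

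For~(b), the Convergence theorem proved above gives $\lim_{i\to\infty} v_i = \va(\Safe(F))$. On the other side, $\lim_{i\to\infty} u_i = \vb(\Reach(S\setminus F))$, so by continuity of $x \mapsto 1-x$ and the determinacy identity, $\lim_{i\to\infty}(1-u_i) = 1 - \vb(\Reach(S\setminus F)) = \va(\Safe(F))$. Therefore $\lim_{i\to\infty} v_i = \lim_{i\to\infty}(1-u_i) = \va(\Safe(F))$, establishing~(b).
\qed
\end{proof}
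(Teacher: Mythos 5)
Your proof is correct and follows essentially the same route as the paper, which states this theorem as a direct consequence of the preceding results (the Convergence theorem for $(v_i)$, the recalled properties of the reachability strategy improvement sequence $(u_i)$, and the determinacy identity $\va(\Safe(F)) + \vb(\Reach(S\setminus F)) = 1$) without writing out a separate argument. Your assembly — each $v_i$ and $u_i$ is a lower bound on the respective value because it is attained by a concrete strategy, determinacy converts the player-2 bound into the upper bound $1-u_i \ge \va(\Safe(F))$, and the two cited convergence results give part (b) — is exactly the intended reasoning.
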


\medskip\noindent{\bf Termination for turn-based games.}
For turn-based stochastic games 
Algorithm~\ref{algorithm:strategy-improve-safe} and as well
as the reachability strategy improvement algorithm terminates.
Each iteration of the reachability strategy improvement algorithm 
of~\cite{CdAH06} is computable in polynomial time, and here we present a 
termination guarantee for the reachability strategy improvement algorithm.
To apply the reachability strategy improvement algorithm 
we assume the objective of player~1 to be a reachability 
objective $\Reach(T)$, and the correctness of the algorithm
relies on the notion of \emph{proper strategies}.
Let $W_2=\set{s\in S \mid \va(\Reach(T))(s)=0}$.
Then the notion of proper strategies and its properties are
as follows.

\begin{definition}[Proper strategies and selectors]
A player-1 strategy $\stra_1$ is \emph{proper}
if for all player-2 strategies $\stra_2$,
and for all states $s \in S \setminus \wab$, we have 
$\Prb_s^{\stra_1,\stra_2}(\Reach(T\cup W_2))=1$.
A player-1 selector $\xi_1$ is \emph{proper} if the memoryless player-1
strategy $\overline{\xi}_1$ is proper. 
\end{definition} 

\begin{lemma}[\cite{CdAH06}]
Given a selector $\xi_1$ for player~1, the memoryless player-1 strategy
$\overline{\xi}_1$ is proper iff for every pure selector $\xi_2$ for
player~2, and for all states $s \in S$, we have 
$\Prb_s^{\overline{\xi}_1, \overline{\xi}_2} (\Reach(T \cup W_2))=1$.
\end{lemma}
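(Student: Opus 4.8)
The ``only if'' direction is immediate from the definitions. If $\overline{\xi}_1$ is proper, then by definition every player-2 strategy, and in particular $\overline{\xi}_2$ for an arbitrary pure selector $\xi_2$, guarantees $\Prb_s^{\overline{\xi}_1,\overline{\xi}_2}(\Reach(T\cup W_2))=1$ from every state $s\in S\setminus W_2$; and for $s\in W_2\subseteq T\cup W_2$ the play already starts inside the target set, so the probability is trivially $1$. Hence the stated equality holds at all $s\in S$.

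For the ``if'' direction, the plan is to fix the memoryless strategy $\overline{\xi}_1$ and pass to the one-player game $G_{\xi_1}$, a 2-MDP in which only player~2 has choices. There, player~2's goal of never entering $T\cup W_2$ is the safety objective $\Safe\bigl(S\setminus(T\cup W_2)\bigr)$; by standard results on Markov decision processes (which are a special case of turn-based stochastic games, cf.\ Theorem~\ref{thrm:memory-determinacy}(3)), player~2 has a pure memoryless strategy $\overline{\xi}_2^{*}$ that is optimal for this objective, i.e.
\[
\Prb_s^{\overline{\xi}_1,\overline{\xi}_2^{*}}\bigl(\Safe(S\setminus(T\cup W_2))\bigr)
\;=\;\sup_{\stra_2\in\Stra_2}\Prb_s^{\overline{\xi}_1,\stra_2}\bigl(\Safe(S\setminus(T\cup W_2))\bigr)
\]
for every state $s$. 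Since $\xi_2^{*}$ is a pure selector, the hypothesis applied to $\overline{\xi}_2^{*}$ gives $\Prb_s^{\overline{\xi}_1,\overline{\xi}_2^{*}}(\Reach(T\cup W_2))=1$, so the safety probability on the left-hand side is $0$. As $\Reach(T\cup W_2)$ and $\Safe(S\setminus(T\cup W_2))$ are complementary events, the optimal value of the safety objective is $0$, and therefore $\Prb_s^{\overline{\xi}_1,\stra_2}(\Reach(T\cup W_2))=1$ for \emph{every} player-2 strategy $\stra_2$ and every state $s$. Restricting to $s\in S\setminus W_2$ is exactly the definition of $\overline{\xi}_1$ being proper.

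The only real content is the step that turns the quantification over arbitrary (history-dependent, randomized) player-2 strategies in the definition of ``proper'' into a quantification over pure memoryless ones; this is precisely what the existence of pure memoryless optimal strategies in the one-player game $G_{\xi_1}$ supplies, after which everything is bookkeeping about complementary events. As an alternative to invoking MDP optimality as a black box, one could argue by contradiction and ``purification'': if some strategy $\stra_2$ achieved $\Prb_s^{\overline{\xi}_1,\stra_2}(\Reach(T\cup W_2))<1$ for some $s$, then (again by MDP theory) some \emph{memoryless} $\overline{\xi}_2$ does too, so the Markov chain $G_{\xi_1,\xi_2}$ has a bottom strongly connected component $C\subseteq S\setminus(T\cup W_2)$; since $C$ is closed, every move in $\supp(\xi_2(s))$ at a state $s\in C$ must lead back into $C$, so selecting one such move per state of $C$ yields a pure selector whose induced chain still has a bottom SCC contained in $S\setminus(T\cup W_2)$, from whose states $\Reach(T\cup W_2)$ has probability $0$ --- contradicting the hypothesis.
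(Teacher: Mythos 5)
Your proof is correct; the paper itself states this lemma without proof (citing \cite{CdAH06}), and your argument --- reducing to the 2-MDP $G_{\xi_1}$, invoking the existence of pure memoryless optimal strategies for the safety objective $\Safe(S\setminus(T\cup W_2))$, and handling the trivial case $s\in W_2$ in the ``only if'' direction --- is exactly the standard route, with the purification-via-closed-recurrent-classes variant serving as a sound self-contained alternative. Nothing is missing.
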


The following result follows from the result of~\cite{CdAH06}
specialized for the case of turn-based stochastic games.

\begin{lemma}
Let $G$ be a turn-based stochastic game with reachability objective
$\Reach(T)$ for player~1.
Let $\gamma_0$ be the initial selector, and $\gamma_i$ be the 
selector obtained at iteration $i$ of the reachability 
strategy improvement algorithm.
If $\gamma_i$ is a pure, proper selector, then the following 
assertions hold:
\begin{enumerate}
\item for all $i\geq 0$, we have $\gamma_i$ is a pure, proper selector;
\item for all $i\geq 0$, we have $u_{i+1} \geq u_i$, where 
$u_i=\vas{\ov{\gamma}_i}(\Reach(T))$ and 
$u_{i+1}=\vas{\ov{\gamma}_{i+1}}(\Reach(T))$; and
\item if $u_{i+1}=u_i$, then $u_i=\va(\Reach(T))$, and
there exists $i$ such that $u_{i+1}=u_i$.
\end{enumerate}
\end{lemma}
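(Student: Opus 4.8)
The plan is to prove the final Lemma by invoking the general correctness result for the reachability strategy improvement algorithm of~\cite{CdAH06}, specialized to turn-based stochastic games, and then to argue the extra termination claim using finiteness of the pure selector space together with the monotonicity already asserted in part~(2).

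First I would establish part~(1), the invariant that properness is preserved. Suppose $\gamma_i$ is a pure, proper selector. By the characterization of properness in the preceding Lemma, it suffices to check that for every pure player-2 selector $\xi_2$ and every state $s$, $\Prb_s^{\ov\gamma_i,\ov\xi_2}(\Reach(T\cup W_2))=1$. The improvement step~3.2 changes $\gamma_i$ only at states $s$ where $\Pre_1(u_i)(s)>u_i(s)$, replacing the move by one achieving the higher $\Pre_1$ value. The key point is that in a turn-based game a strict local improvement at a player-1 state cannot create a new bottom strongly connected component confined to $S\setminus(T\cup W_2)$ that was not reachable to $T\cup W_2$ before: any such component would have all states at a common value-class $r<1$ and would force $u_{i+1}$ to be flat there, contradicting that $\gamma_{i+1}$ strictly improved $u$ at the modified state (this mirrors the value-class argument used in the proof of Theorem~\ref{thrm:safe-termination}). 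Hence the modified strategy still reaches $T\cup W_2$ almost surely against every pure player-2 selector, so $\gamma_{i+1}$ is proper; it is pure by construction. Part~(2), $u_{i+1}\ge u_i$, is exactly the specialization of Lemma~\ref{lemm:stra-improve-safe1}-type reasoning (or directly the reachability improvement lemma of~\cite{CdAH06}) to this setting: the new strategy dominates the old one pointwise because at each modified state it guarantees at least $\Pre_1(u_i)(s)>u_i(s)$ and elsewhere it agrees.

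For part~(3), the first implication — if $u_{i+1}=u_i$ then $u_i=\va(\Reach(T))$ — follows because $u_{i+1}=u_i$ means $\Pre_1(u_i)=u_i$ at all states outside $T\cup W_2$, i.e.\ $u_i$ is a fixpoint of the reachability Bellman operator; since $\gamma_i$ is proper the game reaches $T\cup W_2$ with probability~$1$ from every state, so $u_i$ is in fact \emph{the} value (the optimality-on-termination argument, analogous to Theorem~\ref{thrm:safe-termination}, with the properness assumption supplying the ``no bad recurrent class'' conclusion for free). For the existence of such an $i$: every $\gamma_i$ is a pure memoryless selector, and there are only finitely many of these; by part~(2) the values $u_i$ are monotonically non-decreasing, and whenever $\gamma_{i+1}\neq\gamma_i$ the improvement is strict at some state (Theorem~\ref{thrm:safe-mono} analogue), so no pure selector can repeat with a strictly larger value. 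Hence the sequence $(\gamma_i)$ cannot change infinitely often, which forces $\gamma_{i+1}=\gamma_i$ for some $i$, and then $u_{i+1}=u_i$.

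The main obstacle is the preservation of properness in part~(1): one must rule out that a local strategy change introduces a closed recurrent set trapped inside $S\setminus(T\cup W_2)$. I expect to handle this exactly as in the proof of Theorem~\ref{thrm:safe-termination} — take the least value-class $r$ meeting such a hypothetical component $C$, observe that $\Pre$-monotonicity forces all successors of $U_r(u_i)\cap C$ to stay in $U_r(u_i)$, and then derive a contradiction at the state where the strict improvement was made, since that state must have a successor in a strictly smaller (here, for reachability, strictly larger-value) class. Everything else is either a direct citation to~\cite{CdAH06} or the finiteness/monotonicity bookkeeping, which is routine.
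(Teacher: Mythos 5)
The paper does not actually prove this lemma: it states it as an immediate specialization of the results of~\cite{CdAH06} to turn-based stochastic games, with no proof given. Your proposal therefore goes further than the paper by reconstructing the argument, and the reconstruction follows the right lines: properness is preserved because any closed recurrent class $C \subseteq S \setminus (T \cup W_2)$ in the Markov chain induced by $\gamma_{i+1}$ and a pure player-2 selector either contains no improved state (in which case it was already a recurrent class under $\gamma_i$, contradicting properness of $\gamma_i$) or contains an improved state, in which case the value-class/flatness argument yields a contradiction with the strict inequality $\Pre_1(u_i)(s) > u_i(s)$ there; optimality at a fixed point follows because $\Pre_1(u_i)=u_i$ together with $u_i \leq \va(\Reach(T))$ and the least-fixed-point characterization of the reachability value forces equality; and termination follows from finiteness of pure selectors plus strict improvement whenever the selector changes. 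Your case split in part~(1) is only implicit (you should say explicitly what happens when the hypothetical recurrent class contains no modified state), but the idea is there.

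The one step where your sketch is genuinely too quick is part~(2). The claim that $u_{i+1}\geq u_i$ ``because at each modified state the new selector guarantees at least $\Pre_1(u_i)(s)>u_i(s)$ and elsewhere it agrees'' only shows that $u_i$ is a sub-solution of the Bellman system for the MDP $G_{\gamma_{i+1}}$; it does not by itself imply $u_i\leq u_{i+1}$. This is not ``exactly the Lemma~\ref{lemm:stra-improve-safe1} reasoning'': in the safety case the opponent solves a \emph{max-reachability} MDP, whose value is the \emph{least} feasible solution of the LP, so feasibility of $w_{i+1}$ immediately bounds the opponent's value from above. Here the opponent solves a \emph{min-reachability} MDP, whose value is not the optimum of the analogous feasibility system (a state with a self-loop shows the greatest sub-solution can be $1$ while the min-reachability value is $0$). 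One must invoke properness of $\gamma_{i+1}$ — established in part~(1) — to get uniqueness of the fixed point on $S\setminus(T\cup W_2)$ and conclude that a sub-solution lies below the value. Since you prove part~(1) first and also fall back on the citation to~\cite{CdAH06} (which is all the paper itself does), this is a repairable omission rather than a fatal one, but the dependence of part~(2) on properness should be made explicit.
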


The strategy improvement algorithm of Condon~\cite{Con93} works
only for \emph{halting games}, but the reachability 
strategy improvement algorithm works if we start with a pure, 
proper selector for reachability games that are not halting.
Hence to use the reachability strategy improvement algorithm to
compute values we need to start with a pure, proper selector.
We present a procedure to compute a pure, proper selector,
and then present termination bounds (i.e., bounds on $i$ such 
that $u_{i+1}=u_i$).
The construction of pure, proper selector is based on the notion 
of \emph{attractors} defined below.

\medskip\noindent{\em Attractor strategy.}
Let $A_0=W_2 \cup T$,  and for $i\geq 0$ we have
\[
A_{i+1}= A_i \cup \set{s \in S_1 \cup S_R \mid E(s) \cap A_i \neq \emptyset}
\cup \set{s \in S_2 \mid E(s) \subs A_i}.
\]
Since for all $s \in S \setminus W_2$ we have $\va(\Reach(T))>0$,
it follows that from all states in $S\setminus W_2$ player~1
can ensure that $T$ is reached with positive probability.
It follows that for some $i \geq 0$ we have $A_i=S$.
The pure \emph{attractor} selector $\xi^*$ is as follows:
for a state $s \in (A_{i+1}\setminus A_i) \cap S_1$ 
we have $\xi^*(s)(t)=1$, where $t \in A_i$ (such a $t$ 
exists by construction).
The pure memoryless strategy $\ov{\xi^*}$ ensures that for 
all $i\geq 0$, from $A_{i+1}$ the game reaches $A_i$ 
with positive probability.
Hence there is no end-component $C$ contained in 
$S\setminus (W_2 \cup T)$ in the MDP $G_{\ov{\xi^*}}$.
It follows that $\xi^*$ is a pure selector that is proper,
and the selector $\xi^*$ can be computed in $O(|E|)$ time.
This completes the reachability strategy 
improvement algorithm for turn-based stochastic 
games.
We now present the termination bounds.

\medskip\noindent{\em Termination bounds.}
We present termination bounds for binary turn-based 
stochastic games.
A turn-based stochastic game is binary if for all $s\in S_R$
we have $|E(s)|\leq 2$, and for all $s \in S_R$ if $|E(s)|=2$,
then for all $t \in E(s)$ we have $\trans(s)(t)=\frac{1}{2}$,
i.e., for all probabilistic states there are at most two
successors and the transition function $\trans$ is uniform.

\begin{lemma}\label{lemm:MC-bound}
Let $G$ be a binary Markov chain with $|S|$ states with a reachability
objective $\Reach(T)$.
Then for all $s \in S$ we have 
$\va(\Reach(T))=\frac{p}{q}$, with $p,q \in \nats$ and $p,q \leq 4^{|S|-1}$. 
\end{lemma}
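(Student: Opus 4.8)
The plan is to compute the reachability probabilities of a binary Markov chain by solving the associated linear system and then bounding the size of the numerators and denominators of the rational solutions via Cramer's rule.

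\medskip\noindent\textbf{Setup.}
First I would reduce to the states from which $T$ is reached with positive probability but is not reached with probability one; call this set $S'$. On $S'$ the values $\va(\Reach(T))(s)$ are the unique solution of the linear system
\[
x(s) = \sum_{t} \trans(s)(t)\, x(t) \qquad (s \in S'),
\]
where $x(t)$ is replaced by $1$ if $t \in T$ (or in the set reached with probability one) and by $0$ if $t$ is in the set from which $T$ is unreachable. Because the chain is binary, each row of the system involves at most two successors, each with transition probability $\frac12$; hence after multiplying each equation by $2$ we obtain a system $M\,\bar x = \bar c$ where $M$ has integer entries, with $2$ on the diagonal and at most two off-diagonal entries equal to $-1$ (or $-2$ if a state has a single successor in $S'$), and $\bar c$ is an integer vector whose entries are in $\{0,1,2\}$. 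The matrix $M$ is nonsingular (the restricted system has a unique solution, since from every state of $S'$ the ``boundary'' is reached with positive probability, so $M$ is a strictly substochastic-type M-matrix up to scaling).

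\medskip\noindent\textbf{Cramer's rule and the determinant bound.}
By Cramer's rule, $x(s) = \det(M_s)/\det(M)$, where $M_s$ is $M$ with the column for $s$ replaced by $\bar c$. I would then bound $|\det(M)|$ and $|\det(M_s)|$. Let $m = |S'| \le |S|-1$ (at least one state of $S$ lies outside $S'$, namely any target state, assuming $T \ne \emptyset$; the degenerate cases where $S' = \emptyset$ are trivial). Each row of $M$, and of $M_s$, has entries bounded in absolute value by $2$, and at most three nonzero entries. The cleanest route is Hadamard's inequality applied to the rows: $|\det(M)| \le \prod_{i} \|r_i\|_2$ where $r_i$ is the $i$-th row. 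Each row has $2$-norm at most $\sqrt{2^2 + 1^2 + 1^2} = \sqrt 6 < 4^{?}$; this already gives $|\det M| \le 6^{m/2}$, and I want a bound of the form $4^{m}$. Indeed $6^{m/2} = (\sqrt 6)^m \le 4^m$ since $\sqrt 6 < 4$, so $|\det(M)| \le 4^{m} \le 4^{|S|-1}$, and the same bound applies to $|\det(M_s)|$ since replacing a column by $\bar c$ (entries $\le 2$) does not increase the row norms beyond $\sqrt 6$ either. Writing $p = |\det(M_s)|$ and $q = |\det(M)|$ gives $\va(\Reach(T))(s) = p/q$ with $p, q \in \nats$ and $p, q \le 4^{|S|-1}$, as claimed. (For $s \notin S'$ the value is $0$ or $1$, which trivially satisfies the bound.)

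\medskip\noindent\textbf{Main obstacle.}
The routine part is assembling the linear system and invoking Cramer's rule; the part that needs care is getting the constant in the exponent right, i.e., obtaining exactly $4^{|S|-1}$ rather than something weaker. The naive Hadamard bound on the \emph{scaled} integer matrix with entries up to $2$ and up to three nonzeros per row gives $(\sqrt 6)^m$, which is fine, but one must be careful that (i) the reduction genuinely removes at least one state so the exponent is $|S|-1$ and not $|S|$, and (ii) the constant vector $\bar c$ and the column replacement do not spoil the per-row norm bound. An alternative, if one prefers to avoid Hadamard, is to argue directly that $M$ is obtained from $2I$ minus a matrix of a subgraph with out-degree $\le 2$, and bound the permanent-like expansion of $\det M$ by the number of spanning structures, but this is messier; I expect the Hadamard argument to be the simplest rigorous route, and the only real subtlety is bookkeeping the base of the exponent.
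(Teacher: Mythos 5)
Your proof is correct, but it takes a genuinely different route from the paper's. The paper disposes of this lemma in one line: it observes that a Markov chain reaches its closed recurrent classes with probability~1, so the claim is a special case of Lemma~2 of \cite{Con93}, which gives exactly the $4^{|S|-1}$ bound for halting simple stochastic games. You instead reprove the bound from scratch: restrict to the set $S'$ of states with value strictly between $0$ and $1$, write the linear system for the reachability probabilities there, clear denominators, and apply Cramer's rule with Hadamard's inequality. This is essentially the argument underlying Condon's lemma, so the two proofs are cousins; yours buys self-containedness at the cost of some bookkeeping. Two small imprecisions in that bookkeeping, neither fatal: (i) a state of $S'$ whose unique successor is a different state of $S'$ yields the scaled row $(2,-2)$, whose $2$-norm is $2\sqrt{2} > \sqrt{6}$; and (ii) after replacing a column by $\bar c$, a row can acquire an additional entry of absolute value up to $2$ on top of its original nonzeros, so the per-row norm of $M_s$ is only bounded by $\sqrt{12}$ in general, not $\sqrt{6}$. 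Both constants remain below $4$, so Hadamard still yields $|\det M|, |\det M_s| \le 4^{m} \le 4^{|S|-1}$ and the stated bound survives. Your treatment of the exponent ($m = |S'| \le |S|-1$ since any target state lies outside $S'$, with $T=\emptyset$ trivial) and of the nonsingularity of $M$ (every closed recurrent class either avoids $T$, giving value $0$, or meets $T$, giving value $1$, so from $S'$ the boundary is reached with probability~1) is sound, though the latter deserves the explicit recurrent-class argument rather than the parenthetical appeal to M-matrices.
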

\begin{proof}
The results follow as a special case of Lemma~2 of~\cite{Con93}.
Lemma~2 of~\cite{Con93} holds for halting turn-based stochastic games,
and since Markov chains reaches the set of closed connected recurrent
states with probability~1 from all states the result follows.
\qed
\end{proof}

\begin{lemma}\label{lemm:TB-bound}
Let $G$ be a binary turn-based stochastic game with a reachability
objective $\Reach(T)$.
Then for all $s \in S$ we have 
$\va(\Reach(T))=\frac{p}{q}$, with $p,q \in \nats$ and $p,q \leq 4^{|S_R|-1}$. 
\end{lemma}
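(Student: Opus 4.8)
The plan is to collapse the game to a Markov chain whose only branching states correspond to $S_R$, and then invoke Lemma~\ref{lemm:MC-bound}. First I would fix pure memoryless optimal strategies for both players. By Theorem~\ref{thrm:memory-determinacy}(3) player~1 has a pure memoryless optimal strategy $\stra_1^*$ for $\Reach(T)$, and, since player~2's objective is the complementary safety objective $\Safe(S\setminus T)$, the same theorem supplies player~2 with a pure memoryless optimal strategy $\stra_2^*$; by quantitative determinacy $\va(\Reach(T))(s)=\Prb_s^{\stra_1^*,\stra_2^*}(\Reach(T))$ for all $s\in S$. Fixing $\stra_1^*,\stra_2^*$ reduces $G$ to the Markov chain $M=G_{\stra_1^*,\stra_2^*}$ in which (i) every state of $S_1\cup S_2$ has exactly one successor, because both strategies are pure, and (ii) every state of $S_R$ retains its original transition, hence has at most two successors, each of probability $\tfrac12$, because $G$ is binary. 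I would also make the states of $T$ absorbing, which does not affect $\Reach(T)$.

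Next I would contract the deterministic part of $M$. Let $W_2=\set{s\mid\va(\Reach(T))(s)=0}$; on $T$ the value is $1$ and on $W_2$ it is $0$, both trivially of the required form. For $u\in S_1\cup S_2$ follow the unique-successor walk; since $S$ is finite this walk either (a) first enters $S_R$, or (b) first enters $T$, or (c) runs forever inside $(S_1\cup S_2)\setminus(T\cup S_R)$, in which case the deterministic play from $u$ never meets $T$ and so $\va(\Reach(T))(u)=0$. This assigns to every successor $t$ of a random state $s$ a destination $\rho(t)$ lying in $S_R$, in $T$, or in $W_2$. I would then build the binary Markov chain $M'$ whose states are (a copy of) $S_R$ together with a single absorbing target $w$ (reached when $\rho(t)\in T$) and a single absorbing trap $\bot$ (reached in case~(c)): each $\tfrac12$-branch of a random state $s$ in $M$ is redirected to $\rho(t)$, with $\rho(t)\in S_R$ left as is, $\rho(t)\in T$ rerouted to $w$, and a case-(c) endpoint rerouted to $\bot$ (the single-successor case is analogous with weight $1$; note the branch probabilities are untouched, so $M'$ is again binary, allowing for repeated or self-destinations). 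A routine induction on the lengths of the deterministic walks shows $\Prb_s^{M'}(\Reach(\set{w}))=\Prb_s^{M}(\Reach(T))$ for every $s\in S_R$, and, propagating along deterministic walks, $\va(\Reach(T))(s)\in\set{0,1}$ or equals the $M'$-reachability value of some random state, for every $s\in S$.

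Finally I would apply Lemma~\ref{lemm:MC-bound} to $M'$. Its non-trivial states are in bijection with $S_R$, the states $w$ and $\bot$ being absorbing sinks with trivial rows; by the determinant argument underlying Lemma~\ref{lemm:MC-bound} (Condon's Lemma~2), whose bound charges only the non-sink states to the exponent, every value of $M'$, and hence $\va(\Reach(T))(s)$ for every $s\in S$, has the form $\tfrac{p}{q}$ with $p,q\in\nats$ and $p,q\le 4^{|S_R|-1}$. The step requiring the most care is the contraction: one must verify that collapsing the deterministic segments preserves reachability probabilities exactly, that the value-$0$ states are correctly identified (in particular the cycles of player states in case~(c)), and that the sink states are correctly accounted for so that the exponent is $|S_R|-1$ rather than the cruder $|S|-1$ one would get from Lemma~\ref{lemm:MC-bound} applied to $M$ directly; once this is in place the remaining arguments are bookkeeping together with the already-established Lemma~\ref{lemm:MC-bound}.
\qed
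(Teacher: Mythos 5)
Your proposal is correct and follows essentially the same route as the paper: fix pure memoryless optimal strategies for both players (which exist by Theorem~\ref{thrm:memory-determinacy}), collapse the resulting Markov chain's deterministic states onto their successors so that only the (at most) $|S_R|$ branching states remain, and invoke Lemma~\ref{lemm:MC-bound}. Your write-up is in fact more careful than the paper's two-line argument, in particular about the value-$0$ deterministic cycles and about why the added absorbing sink states do not inflate the exponent beyond $|S_R|-1$.
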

\begin{proof}
Since pure memoryless optimal strategies exist for both players 
(Theorem~\ref{thrm:memory-determinacy}),
we fix pure memoryless optimal strategies $\stra_1$ and 
$\stra_2$ for both players.
The Markov chain $G_{\stra_1,\stra_2}$ can be then reduced to an equivalent 
Markov chains with $|S_R|$ states (since we fix deterministic successors
for states in $S_1 \cup S_2$, they can be collapsed to their successors).
The result then follows from Lemma~\ref{lemm:MC-bound}.
\qed
\end{proof}

From Lemma~\ref{lemm:TB-bound} it follows that at iteration~$i$ of the
reachability strategy improvement algorithm either 
the sum of the values either increases by $\frac{1}{4^{|S_R|-1}}$ or else
there is a valuation $u_i$ such that $u_{i+1}=u_i$.
Since the sum of values of all states can be at most $|S|$, it follows
that algorithm terminates in at most $|S| \cdot 4^{|S_R|-1}$ steps.
Moreover, since the number of pure memoryless strategies is at most
$\prod_{s \in S_1} |E(s)|$, the algorithm terminates in at most
$\prod_{s \in S_1} |E(s)|$ steps.
It follows from the results of~\cite{ZP95} that a turn-based stochastic
game graph $G$ can be reduced to a equivalent binary turn-based
stochastic game graph $G'$ such that the set of player~1 and player~2
states in $G$ and $G'$ are the same and the number of probabilistic
states in $G'$ is $O(|\trans|)$, where $|\trans|$ is the size of the
transition function in $G$.
Thus we obtain the following result.

\begin{theorem}
Let $G$ be a turn-based stochastic game with a reachability objective 
$\Reach(T)$, then the reachability strategy improvement algorithm
computes the values in time 
\[
O\big(\min\set{\prod_{s \in S_1} |E(s)|, 2^{O(|\trans|)} } \cdot \mathit{poly}(|G|\big);
\]
where $\mathit{poly}$ is polynomial function.
\end{theorem}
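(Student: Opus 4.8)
The plan is to assemble three ingredients that are already in hand: the reduction to binary games of~\cite{ZP95}, the rational-value bound of Lemma~\ref{lemm:TB-bound}, and the pure proper attractor selector $\xi^*$ constructed above. First I would apply the reduction of~\cite{ZP95} to replace $G$ by an equivalent binary turn-based stochastic game graph $G'$ whose player-1 and player-2 state sets coincide with those of $G$ and whose number of probabilistic states satisfies $|S_R'| = O(|\trans|)$; this reduction is computable in polynomial time and preserves the values $\va(\Reach(T))$ on the states of $G$. On $G'$ I would run the reachability strategy improvement algorithm started from the pure attractor selector $\xi^*$, which is proper and computable in $O(|E'|)$ time; by the lemma quoted from~\cite{CdAH06}, from then on every iterate $\gamma_i$ is a pure proper selector, the valuations $u_i = \vas{\ov{\gamma}_i}(\Reach(T))$ are monotonically non-decreasing, and $u_{i+1}=u_i$ forces $u_i = \va(\Reach(T))$.

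For the iteration count I would establish two bounds and take their minimum. On one hand, as long as the algorithm has not terminated we have $u_{i+1} \neq u_i$, hence by monotonicity $u_{i+1}(s) > u_i(s)$ for at least one state $s$; by Lemma~\ref{lemm:TB-bound} each $u_i(s)$ is a rational with denominator at most $4^{|S_R'|-1}$, so the sum $\sum_{s} u_i(s)$ increases by at least $4^{-(|S_R'|-1)}$ at every non-terminal step. Since $0 \le \sum_s u_i(s) \le |S'|$, the algorithm terminates within $|S'|\cdot 4^{|S_R'|-1} = 2^{O(|\trans|)}$ iterations. On the other hand, if two iterates coincided then monotonicity would make all intermediate valuations equal and the termination test would have fired earlier, so the pure memoryless player-1 selectors produced are pairwise distinct; there are at most $\prod_{s\in S_1} |E(s)|$ of them, the product being over $S_1$, which the reduction leaves untouched together with its out-degrees. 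Each iteration solves linear programs and evaluates a Markov chain, hence runs in time $\mathrm{poly}(|G'|) = \mathrm{poly}(|G|)$. Multiplying the iteration bound by the per-iteration cost gives the stated running time.

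The delicate points, as opposed to the routine ones, are twofold. First, one must check that the~\cite{ZP95} reduction simultaneously keeps $|S_R'|$ linear in $|\trans|$ — so that $4^{|S_R'|-1}$ is genuinely $2^{O(|\trans|)}$ — leaves $S_1$ and its edges unchanged — so that the product bound transfers verbatim — and preserves values exactly on the original vertices. Second, one must recall that the reachability strategy improvement algorithm of~\cite{CdAH06}, unlike Condon's, is correct on games that are not halting precisely when it is seeded with a pure proper selector, which is why the attractor construction is indispensable here. I expect the bookkeeping of this value-preservation and of the out-degrees under the~\cite{ZP95} reduction to be the main thing that needs care; the rest is arithmetic on the denominator bound.
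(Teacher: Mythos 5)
Your proposal is correct and follows essentially the same route as the paper: the $2^{O(|\trans|)}$ bound via the binary reduction of~\cite{ZP95} combined with the denominator bound of Lemma~\ref{lemm:TB-bound}, the $\prod_{s\in S_1}|E(s)|$ bound via distinctness of the pure memoryless iterates, seeding with the pure proper attractor selector, and polynomial cost per iteration. The extra care you flag (value preservation and preservation of $S_1$ and its out-degrees under the reduction, and applying the denominator bound to the intermediate valuations $u_i$ rather than only to the game value) is exactly what the paper's terser argument implicitly relies on.
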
 

The results of~\cite{GH07} presented an algorithm for turn-based 
stochastic games that works in time $O(|S_R| ! \cdot \mathit{poly}(|G|))$.
The algorithm of~\cite{GH07} works only for turn-based stochastic
games, for general turn-based stochastic games the complexity of 
the algorithm of~\cite{GH07} is better.
However, for turn-based stochastic games where the transition function 
at all states can expressed in constant bits we have 
$|\trans| =O(|S_R|)$.
In these cases the reachability strategy improvement algorithm (that works 
for both concurrent and turn-based stochastic games) 
works in time $2^{O(|S_R|)} \cdot \mathit{poly}(|G|)$ 
as compared to the time $2^{O(|S_R|\cdot \log(|S_R|)} \cdot \mathit{poly}(|G|)$
of  the algorithm of~\cite{GH07}.

\end{document}